\documentclass[12pt]{article}
\usepackage{amsfonts,amssymb,amsmath}
\usepackage[T1]{fontenc}
\usepackage{authblk}

\usepackage{amsthm}
\usepackage{subcaption}
\usepackage{cancel}
\usepackage{xcolor}
\usepackage{graphicx}
\usepackage{tikz}
\usetikzlibrary{arrows,snakes,backgrounds}
\usepackage{hyperref}
\hypersetup{colorlinks=true,citecolor=blue,linkcolor=blue, urlcolor=blue}
\usepackage[font=small, margin=1cm]{caption}

\newtheorem{theorem}{Theorem}
\newtheorem{lemma}{Lemma}

\newcommand{\ZZ}{{\mathbb Z}}

\newcommand{\ra}{\rightarrow}

\newcommand{\g}{\gamma}

\newcommand{\lr}[1]{{\langle {#1}\rangle}}

\def\~#1{\widetilde{#1}}
\def\_#1{_{\mathrm{#1}}}

\usepackage{environ}
\NewEnviron{eqs}{%
\begin{equation}\begin{split}
    \BODY
\end{split}\end{equation}
}

\title{Exact bosonization in arbitrary dimensions}
\author[*]{Yu-An Chen}
\affil[*]{California Institute of Technology, Pasadena, CA 91125, USA}

\begin{document}
\maketitle
\abstract{We extend the previous results of exact bosonization, mapping from fermionic operators to Pauli matrices, in 2d and 3d to arbitrary dimensions. This bosonization map gives a duality between any fermionic system in arbitrary $n$ spatial dimensions and a new class of $(n-1)$-form $\mathbb{Z}_2$ gauge theories in $n$ dimensions with a modified Gauss's law. This map preserves locality and has an explicit dependence on the second Stiefel-Whitney class and a choice of spin structure on the spatial manifold. A new formula for Stiefel-Whitney homology classes on lattices is derived. In the Euclidean path integral, this exact bosonization map is equivalent to introducing a topological ``Steenrod square'' term to the spacetime action.}

\section{Introduction and Summary}

It is well known that every fermionic lattice system in 1d is dual to a lattice system of spins with a $\ZZ_2$ global symmetry (and vice versa). The duality is kinematic (independent of a particular Hamiltonian) and arises from the Jordan-Wigner transformation. Recently it has been shown that any fermionic lattice system in 2d is dual to a $\ZZ_2$ gauge theory with an unusual Gauss's law \cite{CKR18}. The fermion can be identified with the flux excitation of the gauge theory, which is described by the "Chern-Simons-like" term $ i \pi \int A \cup \delta A$ in the spacetime action. The 2d duality is also kinematic. This approach has been generalized to 3d \cite{CK18}. Every fermionic lattice system in 3d is dual to a $\ZZ_2$ 2-form gauge theory with an unusual Gauss's law. Here ``2-form gauge theory'' means that the $\ZZ_2$ variables live on faces ($2$-simplices), while the parameters of the gauge symmetry live on edges ($1$-simplices). 2-form gauge theories in 3+1D have local flux excitations, and the unusual Gauss's law ensures that these excitations are fermions. This Gauss's law can be described by the "Steenrod square" topological action $ i \pi \int B \cup B + B \cup_1 \delta B$. The form of the modified Gauss's law was first observed in \cite{GK16}: a bosonization of fermionic systems in $n$ dimensions must have a global $(n-1)$-form $\ZZ_2$ symmetry with a particular 't Hooft anomaly. The standard Gauss's law leads to a trivial 't Hooft anomaly, so bosonization requires us to modify it in a particular way.

In this paper, we extend these results to arbitrary $n$ dimensions. We show that every fermionic lattice system in $n$-dimension is dual to a $\ZZ_2$ $(n-1)$-form gauge theory with a modified Gauss's law. Our bosonization map is kinematic and local in the same sense as the Jordan-Wigner map\footnote{We only consider the locality preserving map here. Although Jordan-Wigner transformation can map a single fermionic operator into spins, it contains a string operator, which is highly nonlocal. Our bosonization map and Jordan-Wigner transformation both preserve the locality of observables in fermionic systems.}: every local observable on the fermionic side, including the Hamiltonian density, is mapped to a local gauge-invariant observable on the $\ZZ_2$ gauge theory side. In the Euclidean picture, we show explicitly that our bosonization map is equivalent to introducing the topological term in the action:
\begin{equation}
     S_{\text{top}} = i \pi \int_Y (A_{n-1} \cup_{n-3} A_{n-1} + A_{n-1} \cup_{n-2} \delta A_{n-1}),
\end{equation}
where $A_{n-1}$ is a $(n-1)$-form gauge field, a $(n-1)$-cochain $A_{n-1} \in C^{n-1}(Y,\ZZ_2)$, and $Y$ is $(n+1)$-dimensional spacetime manifold. When $A_{n-1}$ is closed, i.e., $\delta A = 0$, this term reduces to the Steenrod square operator \cite{S47}. This ``Steenrod square'' term appears in the Lagrangian of fermionic symmetry-protected-topological (SPT) phases \cite{LZW18} and it is indirectly argued that this term plays the role of statistical transmutation, which makes the theory fermionic \cite{KT17,W17}. Our approach provides an explicit Hamiltonian picture and the bosonization/fermionization procedure is exact, which gives the direct construction for supercohomology fermionic SPT phases. The quantum circuit for the supercohomology SPT ground state and its commuting projector Hamiltonian are derived explicitly in Ref.~\cite{CET20}. All supercohomology fermionic SPT phases in arbitrary dimensions can be constructed from the bosonization map presented in this paper.

There are already several proposals for an analog of the Jordan-Wigner map in arbitrary dimensions \cite{BK02,B05,VC05,B19}. Our construction is most similar to that of Bravyi and Kitaev \cite{BK02}. One advantage of our construction is that we can clearly identify the kind of $n$-dimensional bosonic systems that are dual to fermionic systems: they possess global $(n-1)$-form $\ZZ_2$ symmetry with a specific 't Hooft anomaly, as proposed in \cite{GK16}. It is also manifest in our approach that the bosonization map depends on a choice of spin structure.

\section{Chains, Cochains, and Higher Cup \\ Products}

In this section, we introduce the mathematical tools used in this paper. Our notations and conventions are also described. We will always work with an arbitrary triangulation of a closed simply-connected $n$-dimensional manifold $M_n$ equipped with a branching structure (orientations on edges without forming a loop in any triangle)\footnote{ A direct construction of branching structure is to arbitrarily assign different real numbers on all vertices. For each edge, the arrow is pointed from the smaller number to the larger number.}. 
The vertices, edges, faces, and tetrahedra are denoted $v,e,f,t$, respectively. The general $d$-simplex is denoted as $\Delta_d$. We can label the vertices of $\Delta_d$ as $0,1,2,\dots,d$ such that the directions of edges are from the small number to the larger number. We denote this $d$-simplex as $\Delta_d = \langle 0 1 \dots d \rangle$. Its boundaries are $(d-1)$-simplices $\langle 0, \dots, \hat{i}, \dots, d \rangle$ for $i= 0, 1, \dots, d$, where $\hat{i}$ means $i$ is omitted. A formal sum of $d$-simplices modulo 2 forms an element of the chain $C_d(M_n,\ZZ_2)$.

For every $v$, we define its dual $0$-cochain $\boldsymbol{v}$, which takes value $1$ on $v$, and $0$ otherwise, i.e. $\boldsymbol v(v') = \delta_{v,v'} $. Similarly, $\mathbf{e}$ is an $1$-cochain  $\boldsymbol e(e') = \delta_{e,e'} $, and so forth, i.e., $\boldsymbol \Delta_d$ being a $d$-cochain $\boldsymbol \Delta_d (\Delta'_d) = \delta_{\Delta_d,\Delta'_d}$.  All dual cochains will be denoted in bold. A $d$-cochain $\boldsymbol c_d \in C^d(M_n,\ZZ_2)$ can be identified as a $\ZZ_2$ field living on each $d$-simplex $\Delta_d$, with the value $\boldsymbol c_d (\Delta_d)$.
An evaluation of a cochain $\boldsymbol c$ on a chain $c'$ is the sum of $\boldsymbol c$ evaluated on simplices in $c'$, which is denoted $\int_{c'} \boldsymbol c = \sum_{\Delta \in c'} \boldsymbol c (\Delta)$, . When the integration range is not written, $\boldsymbol c$ is assumed to be the top dimension and $\int \boldsymbol c \equiv \int_{M_n} \boldsymbol c$.

The boundary operator is denoted by $\partial$. For an $n$-simplex $\Delta_n$, $\partial \Delta_n$ consists of all boundary $(n-1)$-simplices of $\Delta_n$:
\begin{eqs}
    \partial (\lr{0,1,2, \dots, d}) = \sum_{i=0}^d \lr{ 0, \dots, \hat{i}, \dots, d }.
\end{eqs}
The coboundary operator is denoted by $\delta$ (not to be confused with the Kronecker delta previously). On a 0-cochain $\boldsymbol v$, $\delta \boldsymbol v$ is an $1$-cochain acting on edges, and is $1$ if $\partial e$ contains $v$ and $0$ otherwise:
\begin{eqs}
    \delta \boldsymbol v (e) = \boldsymbol v (\partial e) =\delta_{v,\partial e}.
\end{eqs}
It is similar for simplices in any dimension. For any $d$-cochain $\boldsymbol c \in C^d(M_n, \ZZ_2)$, its coboundary $\delta \boldsymbol c \in C^{d+1}(M_n, \ZZ_2)$ acting on a $(d+1)$-simplex $\Delta_{d+1} = \lr{0,1,\dots,d+1}$  is defined by:
\begin{eqs}
    \delta \boldsymbol c (\Delta_{d+1}) &\equiv \boldsymbol c (\partial \Delta_{d+1}) \\
    &= \sum_{i=0}^{d+1} \boldsymbol c (\lr{0,\dots,\hat{i},\dots,d+1}).
\end{eqs}

The cup product $\cup$ of a $p$-cochain $\alpha_p$ and a $q$-cochain $\beta_q$ is a $(p+q)$-cochain defined as:
\begin{equation}
    \begin{split}
        [\alpha_p \cup \beta_q ] (\langle 0, 1, \dots, p+q \rangle) &= \alpha_p (\langle 0 1 \dots p \rangle) \beta_q (\langle p, p+1, \dots, p+q \rangle) \\
        &=\alpha_p (\lr{0 \sim p}) \beta_q (\lr{p \sim p+q}),
    \end{split}
\end{equation}
where $ i \sim j$ represents the integers from $i$ to $j$, i.e. $i, i+1, \dots, j$.
The definition of the (higher) cup product $\cup_1$ \cite{GK16, S47} is
\begin{eqs}
    &\left[\alpha_{p} \cup_{1} \beta_{q}\right]( \lr{0, \cdots, p+q-1} )\\
    & =\sum_{i_{0} = 0}^{p+q-1} \alpha_{p}\left( \lr{0 \sim i_{0}, i_{0}+q \sim p+q-1} \right)
    \beta_{q} \left(\lr{ i_{0} \sim i_{0}+q}  \right),
\end{eqs}
and the next cup product $\cup_2$ is
\begin{eqs}
    &\left[\alpha_{p} \cup_{2} \beta_{q}\right]
    ( \lr{0, \cdots, p+q-2} ) \\
    &= \sum_{0 \leq i_{0}<i_{1} \leq p+q-2}
    \alpha_{p}
    \left( \lr{0 \sim i_{0}, i_{1} \sim p+i_{1}-i_{0}-1} \right) \\
    &\quad \quad \quad \quad  \cdot
    \beta_{q}
    \left(\lr{ i_{0} \sim i_{1}, p+i_{1}-i_{0}-1 \sim p+q-2 } \right).
\end{eqs}
The general higher cup product can be expressed as
\begin{equation}
\begin{split}
    &[\alpha_p \cup_a \beta_q] (0,1,\cdots, p+q -a)=\\
    &\sum_{  0 \leq i_0 < i_1 < \cdots < i_a \leq p+q-a } \alpha_p
    (\lr{0\sim i_0,i_1 \sim i_2, i_3 \sim i_4, \cdots}) \times \beta_q(\lr{i_0\sim i_1, i_2 \sim i_3, \cdots } ),
\end{split}
\label{eq: definition of higher cup product}
\end{equation}
where $\{ i_0,i_1, \dots, i_a \}$ are chosen such that the arguments of $\alpha_p$ and $\beta_q$ contain $p+1$ and $q+1$ numbers separately. For example, we have 
\begin{eqs}
    \alpha_2 \cup_1 \beta_1(\lr{012}) = \alpha_2(\lr{012}) \beta_1(\lr{01}) + \alpha_2(\lr{012}) \beta_1(\lr{12}),
\end{eqs}
since the allowed choices are only $(i_0, i_1) = (0,1)$ and $(i_0, i_1) = (1,2)$.
\begin{figure}[htb]
\centering
\resizebox{6cm}{!}{%
\includegraphics[width=0.6\textwidth]{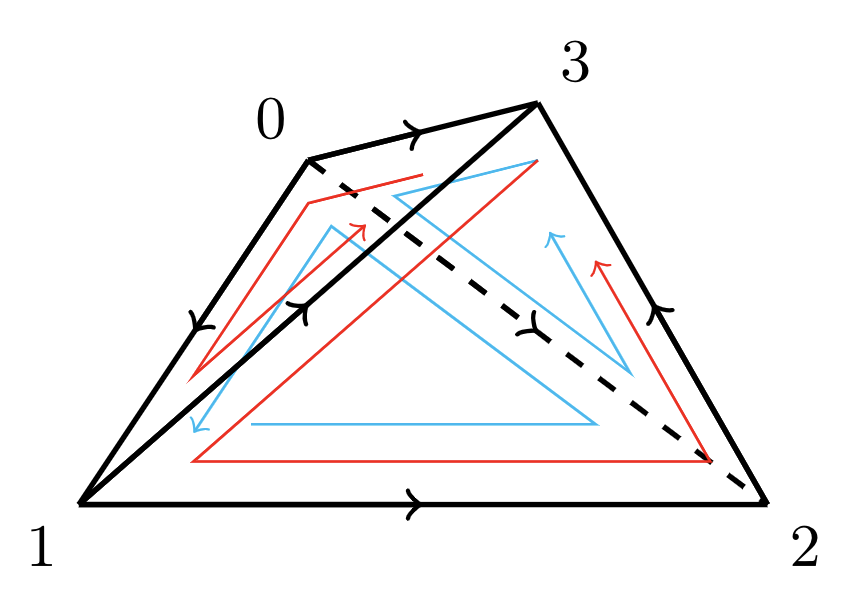}
}
\caption{(Color online) A branching structure on a tetrahedron. The orientation of each face is determined by the right-hand rule. We defined this as the ``$+$'' tetrahedron, the directions of faces $\lr{012}$ and $\lr{023}$ are inward (blue) while the directions of faces $\lr{123}$ and $\lr{013}$ are outward (red). The directions of faces are reversed in the ``$-$'' tetrahedron (mirror image of this tetrahedron) \cite{CK18}.}
\label{fig: tetrahedron}
\end{figure}
Another example is
\begin{eqs}
    \alpha_2 \cup_1 \beta_2 (\lr{0123}) = \alpha_2 (\lr{023}) \beta_2 (\lr{012}) + \alpha_2 (\lr{013}) \beta_2 (\lr{123}),
\end{eqs}
where the choices of $(i_0, i_1)$ are $(0,2)$ and $(1,3)$.
Notice that faces $\lr{023}$ and $\lr{012}$ are inward, while faces $\lr{013}$ and $\lr{123}$ are outward in Fig.~\ref{fig: tetrahedron}. Therefore, the $\cup_1$ product of two $2$-cochains acting on a tetrahedron is the sum of the products of $2$-cochains acting faces with the same orientation (either both inward or both outward). In Section \ref{sec: commutation relations}, this property can be generalized to higher dimensions: the $\cup_{n-2}$ of two $(n-1)$-cochains acting on a $n$-simplex is the sum of the product of $(n-1)$-cochains acting on its boundary $(n-1)$-simplices with the same orientation. This geometry interpretation of higher cup product is crucial since it is further shown that this property coincides with the commutation relations of fermionic hopping operators. The fermionic statistic is captured by higher cup products and this makes it convenient to derive the topological action for fermionic theories. Although not directly used in this paper, a higher cup product of arbitrary cochains has a nice geometrical interpretation \cite{T20}: the higher cup product measures the intersection between dual cells and thickened, shifted version of other dual cells, where the thickening and shifting are determined by the vector frame field. For example, the simplest cup product formula
\begin{eqs}
    \alpha_1 \cup \beta_1 (\lr{012}) = \alpha_1 (\lr{01}) \beta_1(\lr{12}),
\end{eqs}
can be viewed as the intersection point in Fig.~\ref{fig: cup product}.

\begin{figure}[htb]
\centering
\resizebox{6cm}{!}{%
\includegraphics[width=0.6\textwidth]{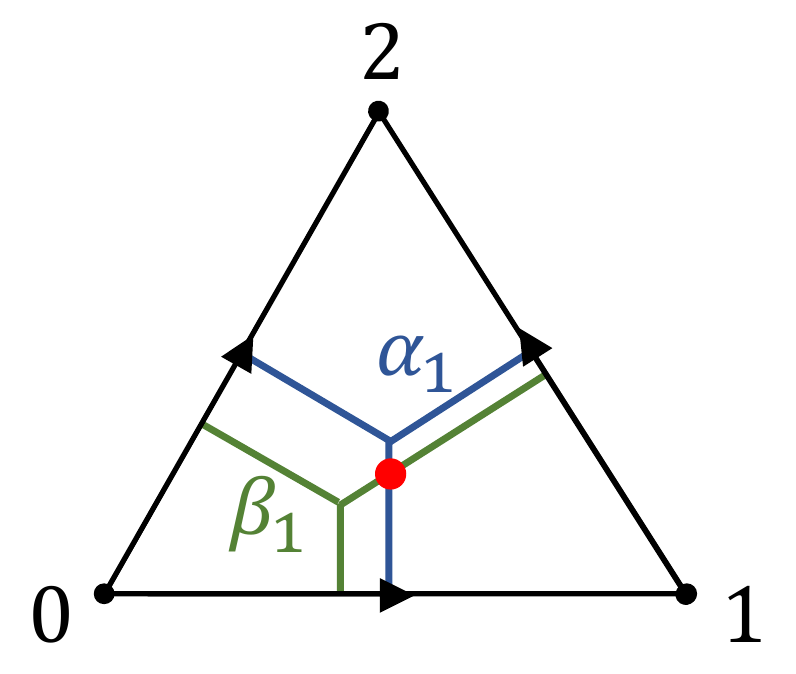}
}
\caption{(Color online) In this figure, $\alpha_1$ is represented by blue segments, dual (attached) to edges $\lr{01}$, $\lr{12}$, and $\lr{02}$ separately. Similarly, $\beta_1$ is represented by green segments with a shifting. The red point is the intersection between the blue segment dual to edge $\lr{01}$ and the green segment dual to edge $\lr{12}$. Therefore, we can read that the cup product of $\alpha_1$ and $\beta_1$ acting on this triangle $\lr{012}$ is $ \alpha_1 (\lr{01}) \beta_1(\lr{12})$.}
\label{fig: cup product}
\end{figure}

It should be emphasized that the cup products satisfy the recursive property:
\begin{equation}
    \alpha \cup_a \beta + \beta \cup_a \alpha = \alpha \cup_{a+1} \delta \beta + \delta \alpha \cup_{a+1} \beta + \delta (\alpha \cup_{a+1} \beta),
\label{eq: recursive property of cup products}
\end{equation}
which can be interpreted as that the non-commutative property of the $\cup_a$ product is equal to the failure of the product rule of the coboundary operation $\delta$ on the $\cup_{a+1}$ product.

Finally, $\Delta^1_n \supset \Delta^2_{n'}$ or $\Delta^2_{n'}\subset \Delta^1_n $ means that the simplex $\Delta^1_n$ contains $\Delta^2_{n'}$ as a subsimplex. A general rule of thumb is that the subset symbol always points to one higher dimension.

\section{Review of Boson-Fermion Duality in (2+1)D and (3+1)D}

We begin by reviewing the duality between fermions and $\ZZ_2$ lattice gauge theory in both two spatial dimensions \cite{CKR18} and three spatial dimensions \cite{CK18}.
On each face $f$ of the 2-manifold $M_2$, we place a single pair of fermionic creation-annihilation operators $c_f,c_f^\dagger$, or equivalently a pair of Majorana fermions:
\begin{eqs}
    \gamma_f = c_f + c^\dagger_f, \quad \gamma'_f = \frac{c_f - c^\dagger_f}{i}
\end{eqs}
The algebra of Majorana fermions is
\begin{equation}
    \{\g_f, \g_{f'}\} = \{\g'_f, \g'_{f'}\} = 2 \delta_{f,f'}, ~ \{\g_f, \g'_{f'}\} = 0
\end{equation}
where $\{A, B\} = AB - BA$ is the anti-commutator. The even fermionic algebra consists of local observables with a trivial fermionic parity (i.e. $P_F= \prod_f (-1)^{c^\dagger_f c_f} = 1$). It is generated by the on-site fermion parity, $$P_f=-i\gamma_f\gamma'_f,$$ and the fermionic hopping operator on every edge $e$, $$S_e=i\gamma_{L(e)}\gamma'_{R(e)},$$ where $L(e)$ and $R(e)$ are faces to the left and right of $e$, with respect to the branching structure of $e$. The commutation relation of hopping operators can be expressed as:
\begin{equation}
    S_e S_{e'} = (-1)^{\int \boldsymbol{e} \cup \boldsymbol{e}' + \boldsymbol{e}' \cup \boldsymbol{e}}S_e S_{e'},
\end{equation}
where the sign from the commutation occurs only when the arrows on the two edges follow head to tail and are on the same triangle, i.e., edges $\{e, e'\}$ being $\{ \lr{01}, \lr{12}\}$ of a triangle $\lr{012}$. In general, for any $1$-cochains $\boldsymbol{\lambda}$ and $\boldsymbol \lambda'$,
\begin{equation}
S_{ \lambda + \lambda'} \equiv (-1)^{\int \boldsymbol \lambda \cup \boldsymbol \lambda'} S_{\lambda'} S_{\lambda}.
\end{equation}
In other words, $S_\lambda$ is the product of $S_e$ over $\{e | \boldsymbol \lambda(e) = 1 \}$ and the sign in front is consistent with the commutation relations.
If we consider the product of fermionic hopping operators on edges around a vertex $v$, the Majorana operators cancel out up to some $P_f$ terms. The two generators $P_f$ and $S_e$ satisfy the following constraint at each vertex $v$ \cite{CKR18}:
\begin{equation}
    (-1)^{\int_{w_2} \boldsymbol v} S_{\delta \boldsymbol v} \prod_{f} P_f^{\int \boldsymbol v \cup \boldsymbol f + \boldsymbol f \cup \boldsymbol v} = 1  
\end{equation}
where $w_2 \in C_0(M_2,\ZZ_2)$ is the $0$-chain which is Poincar\'e dual to the second Stiefel-Whitney cohomology class $\boldsymbol w_2(M_2)$. The explicit expression of $w_2$ is given in Appendix \ref{sec: formula for Stiefel-Whitney homology classes}. We require $M_2$ to be a spin manifold, i.e., the second Stiefel-Whitney class is exact: $w_2 = \partial E$ for some $E \in C_1(M_2,\ZZ_2)$. The $1$-chain $E$ is a choice of the spin structure. The nonexactness of the second Stiefel-Whitney class is the obstruction to determine this $1$-chain $E$, which prevents us from defining a self-consistent bosonization map, which dualizes the even sector of fermionic Hilbert space to a $\ZZ_2$ gauge theory.

The bosonic dual of this system involves $\ZZ_2$-valued spins on the edges of the triangulation. The bosonic algebra are generated by Pauli matrix on edges:
\begin{equation}
X_e=
\begin{bmatrix} 
0 & 1 \\
1 & 0 
\end{bmatrix},
~Y_e=
\begin{bmatrix} 
0 & -i \\
i & 0 
\end{bmatrix},
~Z_e=
\begin{bmatrix} 
1 & 0 \\
0 & -1 
\end{bmatrix}.
\end{equation}
For every face $f$, we define the flux operator:
\begin{equation}
    W_f = \prod_{e \subset f} Z_e,
\end{equation}
and for every edge $e$ we define a unitary operator ${U}_e$ which squares to $1$:
\begin{equation}
    {U}_e =  X_e(\prod_{e^\prime} Z_{e^\prime}^{\int \boldsymbol e' \cup \boldsymbol e}) 
\end{equation}
where $X_e$, $Z_e$ are Pauli matrices acting on a spin at the edge $e$. 
It has been shown in \cite{CKR18} that the sets $\{U_e,~W_f \}$ and $\{S_e,~P_f\}$ satisfy the same commutation relations. The boson-fermion duality map defined on the manifold $M_2$ is:
\begin{equation}
\begin{split}
W_f = \prod_{e \subset f} Z_e&\longleftrightarrow P_f = -i\gamma_f\gamma'_f, \\
{U}_e = X_e (\prod_{e^\prime} Z_{e^\prime}^{\int \boldsymbol e' \cup \boldsymbol e})  &\longleftrightarrow (-1)^{\int_E \boldsymbol e} S_e = (-1)^{\int_E \boldsymbol e} i\gamma_{L(e)}\gamma'_{R(e)},\\
G_v = \prod_{e \supset v} X_e  (\prod_{e^\prime} Z_{e^\prime}^{\int \delta \boldsymbol  v \cup \boldsymbol e'}) &\longleftrightarrow (-1)^{\int_{w_2} \boldsymbol v} S_{ \delta \boldsymbol v} \prod_{f} P_f^{\int \boldsymbol v \cup \boldsymbol f + \boldsymbol f \cup \boldsymbol v} = 1,  \\
\prod_f W_f = 1 &\longleftrightarrow \prod_f P_f
\end{split}
\label{eq: 2d boson-fermion duality}
\end{equation}
where the $0$-chain $w_2 \in C_0(M_2,\ZZ_2)$ is the chain representation of 2nd Stiefel-Whitney class and the $1$-chain $E \in C_1(M_2,\ZZ_2)$ denotes a choice of spin structure ($\partial E = w_2$). For the consistency of this duality map, we need to impose the gauge constraints on bosonic side $\prod_{e \supset v} X_e  (\prod_{e^\prime} Z_{e^\prime}^{\int \delta \boldsymbol  v \cup \boldsymbol e'})= 1$. The gauge invariant subspace in the bosonic Hilbert space is dual to the fermionic system with total fermion parity $\prod_f P_f=1$.

The 3d boson-fermion duality defined on a 3d manifold $M_3$ can be done in a similar way \cite{CK18}. The only difference is that the fermions $\g_t,~\g_t^\prime$ are at the center of tetrahedra $t$ and Pauli operators $X_f,Z_f$ live on faces $f$.  In three spaitial dimensions, any fermionic system can be mapped to a $2$-form $\ZZ_2$ gauge theory on the 3d lattice. The duality dictionary becomes:

\begin{equation}
\begin{split}
W_t = \prod_{f \subset t} Z_f&\longleftrightarrow P_t = -i\gamma_t\gamma'_t, \\
{U}_f = X_f (\prod_{f^\prime} Z_{f^\prime}^{\int \boldsymbol f^\prime \cup_1 \boldsymbol f})
&\longleftrightarrow (-1)^{\int_E \boldsymbol f} S_f = (-1)^{\int_E \boldsymbol f} i\gamma_{L(f)}\gamma'_{R(f)},\\
G_e  = \prod_{f \supset e} X_f  (\prod_{f^\prime} Z_{f^\prime}^{\int \delta \boldsymbol e \cup_1 \boldsymbol f'}) &\longleftrightarrow (-1)^{\int_{w_2} \boldsymbol e} S_{\delta \boldsymbol e} \prod_{t} P_t^{\int \boldsymbol e \cup_1 \boldsymbol t + \boldsymbol t \cup_1 \boldsymbol e} = 1, \\
\prod_t W_t = 1 &\longleftrightarrow \prod_t P_t
\end{split}
\label{eq: 3d boson-fermion duality}
\end{equation}
where the $1$-chain $w_2 \in C_1(M_3,\ZZ_2)$ is the chain representative of the second Stiefel-Whitney class, and the $2$-chain $E \in C_2(M_3,\ZZ_2)$ is a choice of spin structure ($\partial E = w_2$).

\section{Exact bosonization in $n$ dimensions}

From the 2d and 3d formulae \eqref{eq: 2d boson-fermion duality} and \eqref{eq: 3d boson-fermion duality}, it is very natural to conjecture the $n$-dimensional boson-fermion duality. Consider a spin manifold $M_n$ in spatial $n$ dimensions. The fermions live at the center $n$-simplices, i.e. $\g_{\Delta_n},\g'_{\Delta_n}$ for each $\Delta_n$.  The Pauli matrices live on $(n-1)$-simplices, i.e. $X_{\Delta_{n-1}}$ and $Z_{\Delta_{n-1}}$ for each $\Delta_{n-1}$. The $n$-dimensional boson-fermion duality should be:

\hspace*{-2cm}\vbox{
\begin{equation}
\begin{split}
& W_{\Delta_n} \equiv \prod_{\Delta_{n-1} \subset \Delta_n} Z_{\Delta_{n-1}}
\longleftrightarrow P_t = -i\gamma_{\Delta_n}\gamma'_{\Delta_n}, \\[10pt]
&{U}_{\Delta_{n-1}} \equiv X_{\Delta_{n-1}}  \left(\prod_{{\Delta_{n-1}}^\prime}  Z_{{\Delta_{n-1}}^\prime}^{\int \boldsymbol {\Delta_{n-1}}^\prime \cup_{n-2} \boldsymbol {\Delta_{n-1}}}\right)  \\
& \longleftrightarrow (-1)^{\int_E \boldsymbol {\Delta_{n-1}}} S_{\Delta_{n-1}} =     (-1)^{\int_E \boldsymbol {\Delta_{n-1}}} i\gamma_{L({\Delta_{n-1}})}\gamma'_{R({\Delta_{n-1}})},\\[10pt]
&G_{\Delta_{n-2}} \equiv \prod_{{\Delta_{n-1}} \supset {\Delta_{n-2}}} X_{\Delta_{n-1}} \left(\prod_{{\Delta_{n-1}}^\prime} Z_{{\Delta_{n-1}}^\prime}^{\int \delta \boldsymbol {\Delta_{n-2}} \cup_{n-2} \boldsymbol {\Delta_{n-1}}^\prime}\right) \\
&\longleftrightarrow (-1)^{\int_{w_2} \boldsymbol {\Delta_{n-2}}}   S_{\delta \boldsymbol \Delta_{n-2}} \prod_{{\Delta_{n}}} P_{\Delta_{n}}^{\int \boldsymbol {\Delta_{n-2}} \cup_{n-2} \boldsymbol {\Delta_{n}} + \boldsymbol {\Delta_{n}} \cup_{n-2} \boldsymbol {\Delta_{n-2}}} = 1, \\[10pt]
&\prod_{\Delta_{n}} W_{\Delta_{n}} = 1 \longleftrightarrow \prod_{\Delta_{n}} P_{\Delta_{n}}
\end{split}
\label{eq: nd boson-fermion duality}
\end{equation}}

\noindent where $w_2 \in C_{n-2}(M_n,\ZZ_2)$ is the chain representative of the second Stiefel-Whitney class, $E \in C_{n-1}(M_n,\ZZ_2)$ denotes a choice of spin structure ($\partial E = w_2$), and for general $(n-1)$-cochain $\boldsymbol \lambda_{n-1}$ and $ \boldsymbol \lambda'_{n-1}$, the product of $S$ operators is defined as
\begin{equation}
    S_{ \lambda_{n-1} + \lambda'_{n-1}} \equiv (-1)^{\int \boldsymbol \lambda_{n-1} \cup_{n-2} \boldsymbol \lambda'_{n-1}} S_{\lambda'_{n-1}} S_{\lambda_{n-1}}.
\label{eq: S lambda}
\end{equation}
This $n$-dimensional boson-fermion duality \eqref{eq: nd boson-fermion duality} is the main theorem of this paper, which will be proved by the end of this section.

\subsection{Commutation relations} \label{sec: commutation relations}
 
 Consider an $n$-simplex $\Delta_n = \langle 012 \dots n \rangle$. Its boundary contains all $(n-1)$-simplex $ (\partial \Delta_n)^i = \langle 0 \dots \hat{i} \dots n \rangle$ where $\hat{i}$ means the vertex $i$ is omitted. We define the orientation of $(\partial \Delta_n)^i$ as $O((\partial \Delta_n)^i) = (-1)^i$. For ``$+$''-oriented $\Delta_n$, if $O((\partial \Delta_n)^i)=1$, the boundary $(\partial \Delta_n)^i$ is outward, and if $O((\partial \Delta_n)^i)=-1$, the boundary $(\partial \Delta_n)^i$ is inward. For ``$-$''-oriented $\Delta_n$, the inward and outward boundaries are opposite. $S_{\Delta_{n-1}}$ and $S_{\Delta'_{n-1}}$ anti-commute only when $\Delta_{n-1}$ and $\Delta'_{n-1}$ are both inward or both outward boundaries of some $n$-simplex, i.e. $\Delta_{n-1}, \Delta'_{n-1} \in \partial \Delta_n$. We are going to prove that this is equivalent to
\begin{equation}
    S_{\Delta_{n-1}} S_{\Delta'_{n-1}} = (-1)^{\int \boldsymbol {\Delta_{n-1}} \cup_{n-2} \boldsymbol {\Delta'_{n-1}}  + \boldsymbol {\Delta'_{n-1}} \cup_{n-2} \boldsymbol {\Delta_{n-1}} } S_{\Delta'_{n-1}} S_{\Delta_{n-1}}.
\label{eq: commutation relation of fermionic hopping}
\end{equation}
From the definition of the higher cup product \eqref{eq: definition of higher cup product}, we have
\begin{equation}
    \begin{split}
        &[\boldsymbol {\Delta_{n-1}} \cup_{n-2} \boldsymbol {\Delta'_{n-1}}](0,1,\cdots,n) \\
        =&\sum_{0 \leq i_0 < i_1 < \cdots < i_{n-2} \leq n} \boldsymbol {\Delta_{n-1}}(0\sim i_0,i_1 \sim i_2, i_3 \sim i_4, \cdots) \boldsymbol {\Delta'_{n-1}}(i_0\sim i_1, i_2 \sim i_3, \cdots) \\
        =& \sum_{0 \leq j_1 < j_2 \leq n| j_1,j_2 \in \text{even} } \boldsymbol {\Delta_{n-1}} (\langle 0 \dots \hat{j_2} \dots n \rangle) \boldsymbol {\Delta'_{n-1}} (\langle 0 \dots \hat{j_1} \dots n \rangle) \\
        &+ \sum_{0 \leq k_1 < k_2 \leq n| k_1,k_2 \in \text{odd} } \boldsymbol {\Delta_{n-1}} (\langle 0 \dots \hat{k_1} \dots n \rangle) \boldsymbol {\Delta'_{n-1}} (\langle 0 \dots \hat{k_2} \dots n \rangle).
    \end{split}
\label{eq: definition of cup n-2}
\end{equation}
The $\cup_{n-2}$ only contains the product of boundaries $\Delta^i_{n-1}$ with the same orientation (inward or outward) and each pair of $\Delta^i_{n-1}, \Delta^{i'}_{n-1}$ with the same orientation appears exactly once. Therefore, the $\cup_{n-2}$ expression in \eqref{eq: commutation relation of fermionic hopping} captures the commutation relations of fermionic hopping operators $S_{\Delta_{n-1}}$. It is easy to check that bosonic operators $U_{\Delta_{n-1}}$ satisfy the same commutation relations:
\begin{equation}
    U_{\Delta_{n-1}} U_{\Delta'_{n-1}} = (-1)^{\int \boldsymbol {\Delta_{n-1}} \cup_{n-2} \boldsymbol {\Delta'_{n-1}}  + \boldsymbol {\Delta'_{n-1}} \cup_{n-2} \boldsymbol {\Delta_{n-1}} } U_{\Delta'_{n-1}} U_{\Delta_{n-1}}.
\label{eq: commutation relation of bosonic hopping}
\end{equation}
Therefore, $\{S_{\Delta_{n-1}}, P_{\Delta_n} \}$ and $\{U_{\Delta_{n-1}}, W_{\Delta_n} \}$ in \eqref{eq: nd boson-fermion duality} have the same commutation relations.

\subsection{Gauge constraints}

In this section, we will derive the constraints on fermionic operators:
\begin{equation}
     (-1)^{\int_{w_2} \boldsymbol \Delta_{n-2}} S_{\delta {\Delta_{n-2}}} \prod_{{\Delta_{n}}} P_{\Delta_{n}}^{\int \boldsymbol \Delta_{n-2} \cup_{n-2} \boldsymbol \Delta_{n} + \boldsymbol \Delta_{n} \cup_{n-2} \boldsymbol \Delta_{n-2}} = 1.
\end{equation}
This follows directly from the following two lemmas.

\begin{lemma}
    The Majorana operators in $S_{\delta {\Delta_{n-2}}}$ cancel out with Majorana operators in $\prod_{{\Delta_{n}}} P_{\Delta_{n}}^{\int \boldsymbol \Delta_{n-2} \cup_{n-2} \boldsymbol \Delta_{n} + \boldsymbol \Delta_{n} \cup_{n-2} \boldsymbol \Delta_{n-2}}$.
\label{lemma: S P cancel}
\end{lemma}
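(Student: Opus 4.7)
The plan is to verify, for every $n$-simplex $\Delta_n$, that the Majoranas $\gamma_{\Delta_n}$ and $\gamma'_{\Delta_n}$ appear with even multiplicity in the product
\begin{equation*}
    S_{\delta \boldsymbol \Delta_{n-2}} \cdot \prod_{\Delta_n} P_{\Delta_n}^{[\boldsymbol \Delta_{n-2} \cup_{n-2} \boldsymbol \Delta_n + \boldsymbol \Delta_n \cup_{n-2} \boldsymbol \Delta_{n-2}](\Delta_n)}.
\end{equation*}
Since $\delta \boldsymbol \Delta_{n-2} = \sum_{\Delta_{n-1} \supset \Delta_{n-2}} \boldsymbol \Delta_{n-1}$, the operator $S_{\delta \boldsymbol \Delta_{n-2}}$ equals, up to a sign produced by \eqref{eq: S lambda}, the product $\prod_{\Delta_{n-1} \supset \Delta_{n-2}} S_{\Delta_{n-1}}$. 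For a fixed $\Delta_n$, only those $S_{\Delta_{n-1}}$ with $\Delta_{n-1}$ a face of $\Delta_n$ can deposit Majoranas on $\Delta_n$; imposing also $\Delta_{n-1} \supset \Delta_{n-2}$ leaves exactly two faces when $\Delta_n \supset \Delta_{n-2}$ and none otherwise. When $\Delta_n \supset \Delta_{n-2}$, let $a < b$ be the positions of the two vertices of $\Delta_n$ absent from $\Delta_{n-2}$; the two relevant faces are then $(\partial\Delta_n)^a$ and $(\partial\Delta_n)^b$.

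Under the convention that $L(\Delta_{n-1}) = \Delta_n$ iff $\Delta_{n-1}$ is inward in $\Delta_n$ and $R(\Delta_{n-1}) = \Delta_n$ iff outward, the hopping $S_{(\partial\Delta_n)^i}$ deposits a single $\gamma_{\Delta_n}$ when $i$ is odd (inward, by $O((\partial\Delta_n)^i) = (-1)^i$ from Section~\ref{sec: commutation relations}) and a single $\gamma'_{\Delta_n}$ when $i$ is even. If $a \equiv b \pmod 2$, both deposits are of the same type, their product is $\pm 1$, and the exponent of $P_{\Delta_n}$ must be $0 \pmod 2$ for the residual Majoranas to vanish. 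If $a \not\equiv b \pmod 2$, the product is $\pm i P_{\Delta_n}$, and the exponent must be $1 \pmod 2$.

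To verify that this matches the exponent appearing in the formula, I would apply the recursive identity \eqref{eq: recursive property of cup products} with $\alpha = \boldsymbol \Delta_{n-2}$ and $\beta = \boldsymbol \Delta_n$ at cup-product level $n-2$. Since $\delta \boldsymbol \Delta_n = 0$ on a closed $n$-manifold (no $(n{+}1)$-simplices), the term $\boldsymbol \Delta_{n-2} \cup_{n-1} \delta \boldsymbol \Delta_n$ vanishes. The coboundary $\delta(\boldsymbol \Delta_{n-2} \cup_{n-1} \boldsymbol \Delta_n)$ evaluated on $\Delta_n$ equals $(\boldsymbol \Delta_{n-2} \cup_{n-1} \boldsymbol \Delta_n)(\partial \Delta_n)$, which vanishes because $\boldsymbol \Delta_n$ is zero on every proper sub-simplex of $\Delta_n$, and hence on every $(n-1)$-face. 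Consequently,
\begin{equation*}
    [\boldsymbol \Delta_{n-2} \cup_{n-2} \boldsymbol \Delta_n + \boldsymbol \Delta_n \cup_{n-2} \boldsymbol \Delta_{n-2}](\Delta_n) \equiv \sum_{\Delta_{n-1} \supset \Delta_{n-2}} [\boldsymbol \Delta_{n-1} \cup_{n-1} \boldsymbol \Delta_n](\Delta_n) \pmod 2.
\end{equation*}
A direct unpacking of \eqref{eq: definition of higher cup product} shows $[\boldsymbol \Delta_{n-1} \cup_{n-1} \boldsymbol \Delta_n](\Delta_n) = 1$ precisely when $\Delta_{n-1}$ is an inward $(n-1)$-face of $\Delta_n$. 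Hence the right-hand side is the mod-$2$ count of inward faces among $(\partial\Delta_n)^a$ and $(\partial\Delta_n)^b$, which equals $1$ exactly when one of $a, b$ is odd and the other even, matching the cancellation pattern above.

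The main obstacle is the subsidiary combinatorial identity that $[\boldsymbol \Delta_{n-1} \cup_{n-1} \boldsymbol \Delta_n](\Delta_n) = 1$ iff $\Delta_{n-1}$ is an inward face of $\Delta_n$. Proving it requires enumerating the valid index tuples $(i_0, \ldots, i_{n-1})$ in the higher cup product formula and showing that the constraints force the alpha argument's omitted vertex to have odd position, with case splits on the parity of $n$ due to whether the last block is an alpha or beta block.
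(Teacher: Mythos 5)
Your proposal is correct and follows essentially the same route as the paper: reduce to a per-$\Delta_n$ check of the two faces containing $\Delta_{n-2}$, use the recursive identity \eqref{eq: recursive property of cup products} with $\delta\boldsymbol\Delta_n=0$ to rewrite the exponent as $\delta\boldsymbol\Delta_{n-2}\cup_{n-1}\boldsymbol\Delta_n$, and match parities of the omitted vertices against the inward/outward pattern of the deposited Majoranas. The combinatorial identity you flag as the remaining obstacle is exactly the short computation the paper performs, unpacking \eqref{eq: definition of higher cup product} to get $[(\boldsymbol\Delta^L_{n-1}+\boldsymbol\Delta^R_{n-1})\cup_{n-1}\boldsymbol\Delta_n](\langle 0\dots n\rangle)=\sum_{j\,\mathrm{odd}}(\boldsymbol\Delta^L_{n-1}+\boldsymbol\Delta^R_{n-1})(\langle 0\dots\hat j\dots n\rangle)$.
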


\begin{lemma}
    The sign difference of $S_{\delta {\Delta_{n-2}}}$ and the product of on-site fermion parities $\prod_{{\Delta_{n}}} P_{\Delta_{n}}^{\int \boldsymbol \Delta_{n-2} \cup_{n-2} \boldsymbol \Delta_{n} + \boldsymbol \Delta_{n} \cup_{n-2} \boldsymbol \Delta_{n-2}}$ is $-(-1)^{\sum^d_{i=1} \int \boldsymbol \Delta^{i-1}_{n-1} \cup_{n-2} \boldsymbol \Delta^{i}_{n-1} }$ where we order $(n-1)$-simplices $\{ \Delta_{n-1} | \Delta_{n-1} \supset {\Delta_{n-2}} \}$ counterclockwise as  $\boldsymbol \Delta^1_{n-1}, \boldsymbol \Delta^2_{n-1}, \dots, \boldsymbol\Delta^{d-1}_{n-1}, \boldsymbol\Delta^d_{n-1} \equiv \boldsymbol\Delta^0_{n-1}$, as shown in Fig.~\ref{fig: S_prod2}. This sign is a chain representative of the second Stiefel-Whitney class:
    \begin{equation}
        -(-1)^{\sum^d_{i=1} \int \boldsymbol \Delta^{i-1}_{n-1} \cup_{n-2} \boldsymbol \Delta^{i}_{n-1} } = (-1)^{\int_{w_2} \Delta_{n-2} }.
    \label{eq: S P sign}
    \end{equation}
\label{lemma: S P sign}
\end{lemma}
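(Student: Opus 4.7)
The plan is to split the lemma into two parts: an algebraic sign calculation via iteration of \eqref{eq: S lambda}, and a topological identification of the resulting sign with a chain representative of $w_2$.

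For the algebraic part, I would write $S_{\delta \boldsymbol \Delta_{n-2}} = S_{\sum_i \Delta^i_{n-1}}$, where $\Delta^1_{n-1}, \dots, \Delta^d_{n-1}$ are the $(n-1)$-simplices containing $\Delta_{n-2}$, ordered cyclically. Iterating \eqref{eq: S lambda} rewrites this as an ordered product $S_{\Delta^d_{n-1}} \cdots S_{\Delta^1_{n-1}}$ multiplied by an accumulated cup-product sign $\prod_{i<j}(-1)^{\int \boldsymbol \Delta^i_{n-1} \cup_{n-2} \boldsymbol \Delta^j_{n-1}}$. Then, exploiting the cyclic fan structure, each consecutive pair $\Delta^{i-1}_{n-1}, \Delta^i_{n-1}$ shares a unique $n$-simplex $\Delta_n \supset \Delta_{n-2}$, and by the Majorana cancellation of Lemma \ref{lemma: S P cancel} and the commutation analysis of Section \ref{sec: commutation relations}, the product $S_{\Delta^{i-1}_{n-1}} S_{\Delta^i_{n-1}}$ reduces to $\pm P_{\Delta_n}$ with a local sign that is exactly the pairwise $\cup_{n-2}$ cup product on that $n$-simplex. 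The resulting $P_{\Delta_n}$ factors reproduce $\prod_{\Delta_n} P_{\Delta_n}^{\int \boldsymbol \Delta_{n-2} \cup_{n-2} \boldsymbol \Delta_n + \boldsymbol \Delta_n \cup_{n-2} \boldsymbol \Delta_{n-2}}$, while the closing wrap-around pair $\Delta^d_{n-1} \Delta^0_{n-1}$ contributes the extra overall $-1$. Collecting all the residual signs yields exactly $-(-1)^{\sum_{i=1}^d \int \boldsymbol \Delta^{i-1}_{n-1} \cup_{n-2} \boldsymbol \Delta^i_{n-1}}$.

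The substantive part is the second equality in \eqref{eq: S P sign}, identifying this sign with $(-1)^{\int_{w_2} \boldsymbol \Delta_{n-2}}$. Here I would show (i) that $\Delta_{n-2} \mapsto 1 + \sum_i \int \boldsymbol \Delta^{i-1}_{n-1} \cup_{n-2} \boldsymbol \Delta^i_{n-1} \pmod 2$ defines an $(n-2)$-cycle, by summing around the link of any $(n-3)$-simplex and using the recursive identity \eqref{eq: recursive property of cup products} to cancel terms pairwise; and (ii) that this cycle represents $w_2$, by matching its Poincar\'e dual to the standard combinatorial expression for the second Stiefel-Whitney class on simplicial manifolds. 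Geometrically, each term $\int \boldsymbol \Delta^{i-1}_{n-1} \cup_{n-2} \boldsymbol \Delta^i_{n-1}$ records a ``twist'' in the branching structure between consecutive faces of the fan around $\Delta_{n-2}$, and the parity of the total twist detects non-triviality of the local tangent frame, which is precisely the content of $w_2$.

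The first stage is essentially a careful induction under \eqref{eq: S lambda} and the commutation relation \eqref{eq: commutation relation of fermionic hopping}. The genuine obstacle is the second stage: recognizing a specific cyclic sum of higher cup products as a representative of a Stiefel-Whitney class. This is in effect the ``new formula for Stiefel-Whitney homology classes on lattices'' advertised in the abstract, and I would isolate it as a standalone combinatorial result, likely deferred to an appendix, verified either directly against a Halperin--Toledo-style formula or by checking the defining axioms of $w_2$ on a generating family of test cycles.
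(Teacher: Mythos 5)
Your first stage matches the paper's argument essentially step for step: the paper likewise expands $S_{\delta \Delta_{n-2}}$ via \eqref{eq: S lambda}, telescopes the ordered product $S_{\Delta^d_{n-1}}\cdots S_{\Delta^1_{n-1}}$ around the fan so that each shared $n$-simplex contributes a factor $P_{\Delta_n}^{\int \boldsymbol \Delta_{n-2}\cup_{n-2}\boldsymbol \Delta_n+\boldsymbol \Delta_n\cup_{n-2}\boldsymbol \Delta_{n-2}}$ (using the cancellation pattern of Lemma \ref{lemma: S P cancel} and Fig.~\ref{fig: S_prod1}), and extracts the overall $-1$ from the closing pair $S_{\Delta^d_{n-1}}S_{\Delta^1_{n-1}}$. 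That part of your plan is sound.

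The second stage is where your proposal stops short of a proof, and it is precisely the content of the lemma. The paper's route is more concrete than ``match to a Halperin--Toledo-style formula'': it first rewrites $\sum_{i=1}^d\int\boldsymbol\Delta^{i-1}_{n-1}\cup_{n-2}\boldsymbol\Delta^{i}_{n-1}$ using the explicit expansion \eqref{eq: definition of cup n-2}, under which $\cup_{n-2}$ on an $n$-simplex only pairs boundary faces whose omitted vertices have equal parity; the cyclic sum then becomes a count, over $n$-simplices of the fan, of $\boldsymbol\Delta_{n-2}(\langle 0\cdots\hat{j_1}\cdots\hat{j_2}\cdots n\rangle)$ with $j_1,j_2$ both even on ``$-$''-oriented simplices and both odd on ``$+$''-oriented ones (Eq.~\eqref{eq: prod of cup}). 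This expression is then identified, in Lemma \ref{lemma: w2}, with the Goldstein--Turner representative of $w_2$ built from simplices that are \emph{regular} with respect to the vertex ordering (Theorem \ref{eq: theorem} of Appendix \ref{sec: formula for Stiefel-Whitney homology classes}); the constant $1$ in \eqref{eq: c Delta} comes from $\Delta_{n-2}$ being regular in itself and accounts for your leading minus sign. Two remarks on your plan: step (i), proving the coefficient defines a cycle, is unnecessary once the direct identification is made; and the Halperin--Toledo formula lives on the barycentric subdivision, so it does not interface cleanly with a branching structure on the original triangulation --- the ordering-dependent Goldstein--Turner version is what reduces the match to a finite combinatorial check. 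Without invoking that specific formula (or re-deriving its content), the equality \eqref{eq: S P sign} remains an assertion, and that is the one genuine gap in your proposal.
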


\begin{proof}[Proof of Lemma \ref{lemma: S P cancel}]

Let us denote $\Delta_n = \langle 01 \dots n \rangle$ formed by $\Delta_{n-2}$ and two $(n-1)$-simplex $\Delta^L_{n-1}$ and $\Delta^R_{n-1}$, shown in Fig.~\ref{fig: S_prod1}(a). We know that $S_{\delta {\Delta_{n-2}}}$ contains $\g_{\Delta_n} \g'_{\Delta_n}$ if and only if $\Delta^L_{n-1}, \Delta^R_{n-1}$ are one inward boundary and one outward boundary of $n$-simplex $\Delta_n$, as indicated in Fig. \ref{fig: S_prod1}(b) and (c).

For the product of $P_{\Delta_n}$, we simplify the integral as
\begin{equation}
    \begin{split}
        \int \boldsymbol \Delta_{n-2} \cup_{n-2} \boldsymbol \Delta_{n} + \boldsymbol \Delta_{n} \cup_{n-2} \boldsymbol \Delta_{n-2} = \int \delta  \boldsymbol \Delta_{n-2} \cup_{n-1} \boldsymbol \Delta_{n},
    \end{split}
\label{eq: P integrand}
\end{equation}
where we have used the property $\delta (\alpha \cup_{n-1} \beta) = \delta \alpha \cup_{n-1} \beta + \alpha \cup_{n-1} \delta \beta + \alpha \cup_{n-2} \beta + \beta \cup_{n-2} \alpha$ and $\delta \Delta_n = 0$ (since $n$ is the top dimension).
The integral \eqref{eq: P integrand} has only the contribution from $\Delta_n = \langle 01 \dots n \rangle$:
\begin{equation}
    \begin{split}
        &\int \boldsymbol \Delta_{n-2} \cup_{n-2} \boldsymbol \Delta_{n} + \boldsymbol \Delta_{n} \cup_{n-2} \boldsymbol \Delta_{n-2}\\
        =&[(\boldsymbol \Delta^L_{n-1}+ \boldsymbol \Delta^R_{n-1}) \cup_{n-1} \boldsymbol {\Delta_{n}}] (\langle 01 \dots n \rangle) \\
        =&\sum_{0 \leq i_0 < i_1 < \cdots < i_{n-1} \leq n} (\boldsymbol \Delta^L_{n-1}+ \boldsymbol \Delta^R_{n-1}) (0\sim i_0,i_1 \sim i_2, i_3 \sim i_4, \cdots) \boldsymbol \Delta_{n}(i_0\sim i_1, i_2 \sim i_3, \cdots) \\
        =&\sum_{0\leq j \leq n | j \in \text{odd}} (\boldsymbol \Delta^L_{n-1}+ \boldsymbol \Delta^R_{n-1}) (\langle 0 \dots \hat{j} \dots n \rangle)\boldsymbol \Delta_{n} (\langle 01 \dots n \rangle) \\
        =&\sum_{0\leq j \leq n | j \in \text{odd}} (\boldsymbol \Delta^L_{n-1}+ \boldsymbol \Delta^R_{n-1}) (\langle 0 \dots \hat{j} \dots n \rangle)
    \end{split}
\end{equation}
which is $1$ if and only $\Delta^L_{n-1}, \Delta^R_{n-1}$ are one inward boundary and one outward boundary of the $n$-simplex $\Delta_n$. This shows that product of $P_{\Delta_n}$ contain $P_{\Delta_n} \sim \g_{\Delta_n} \g'_{\Delta_n}$ if and only if $\Delta^L_{n-1}, \Delta^R_{n-1}$ are one inward boundary and one outward boundary of the $n$-simplex $\Delta_n$. This cancels out with $S_{\delta {\Delta_{n-2}}}$ exactly.
\end{proof}

\begin{figure}[htb]
\centering
\includegraphics[width=1\textwidth]{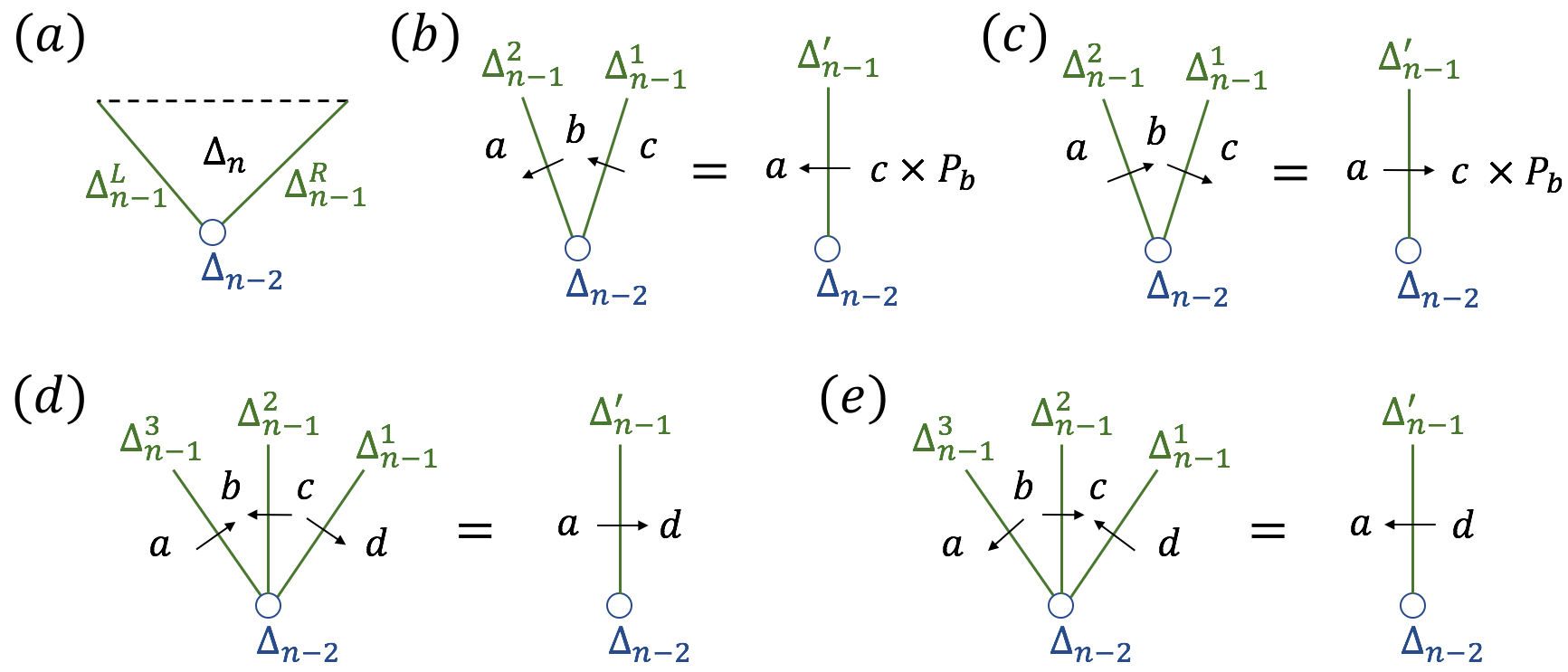}
\caption{(Color online) (a) The $n$-simplex $\Delta_n$ is formed by the $(n-1)$-simplex $\Delta_{n-2}$ and two $(n-1)$-simplex $\Delta^L_{n-1}$ and $\Delta^R_{n-1}$. (b) The product of $S_{\Delta_{n-2}}$ is $(i \g_b \g'_a) (i \g_c \g'_b) = (i \g_c \g'_a) (-i\g_b \g'_b)=(i \g_c \g'_a) P_b $. (c) The product of $S_{\Delta_{n-2}}$ is $(i \g_a \g'_b) (i \g_b \g'_c) = (i \g_a \g'_c) (-i\g_b \g'_b)=(i \g_a \g'_c) P_b $. (d) The product of $S_{\Delta_{n-2}}$ is $(i \g_a \g'_b) (i \g_c \g'_b) (i \g_c \g'_d) = i \g_a \g'_d$. (e) The product of $S_{\Delta_{n-2}}$ is $(i \g_b \g'_a) (i \g_b \g'_c) (i \g_d \g'_c) = i \g_d \g'_a$.} 
\label{fig: S_prod1}
\end{figure}

\begin{proof}[Proof of Lemma \ref{lemma: S P sign}]
\begin{figure}[htb]
\centering
\includegraphics[width=0.8\textwidth]{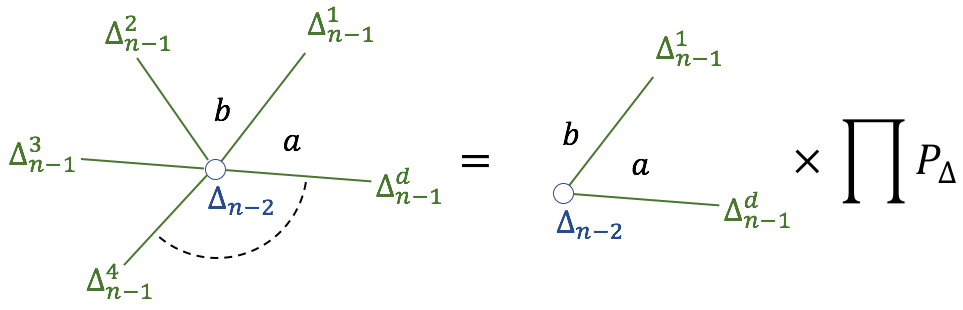}
\caption{(Color online) By the operations defined in Fig.~\ref{fig: S_prod1}, we can simplify the product $S_{\Delta^d_{n-1}} \cdots S_{\Delta^2_{n-1}} S_{\Delta^1_{n-1}} = S_{\Delta^d_{n-1}} S_{\Delta^1_{n-1}} \prod_{{\Delta_{n}}\neq a,b} P_{\Delta_{n}}^{\int \boldsymbol \Delta_{n-2} \cup_{n-2} \boldsymbol \Delta_{n} + \boldsymbol \Delta_{n} \cup_{n-2} \boldsymbol \Delta_{n-2}}$. }
\label{fig: S_prod2}
\end{figure}
We compare the signs between
\begin{equation}
    S_{\delta {\Delta_{n-2}}} = (-1)^{\sum_{\boldsymbol \Delta_{n-1} < \boldsymbol \Delta'_{n-1} | \Delta_{n-1}, \Delta'_{n-1} \supset \Delta_{n-2} } \boldsymbol \Delta_{n-1} \cup_{n-2} \boldsymbol \Delta'_{n-1} } \prod_{\Delta_{n-1} \supset \Delta_{n-2}} S_{\Delta_{n-1}}
\label{eq: prod of S}
\end{equation}
and
\begin{equation}
    \prod_{{\Delta_{n}}} P_{\Delta_{n}}^{\int \boldsymbol \Delta_{n-2} \cup_{n-2} \boldsymbol \Delta_{n} + \boldsymbol \Delta_{n} \cup_{n-2} \boldsymbol \Delta_{n-2}}
\end{equation}
where we have used the definition of $S_{ \lambda_{n-1}}$ in \eqref{eq: S lambda}.
As shown in Fig. \ref{fig: S_prod2},
\begin{equation}
    \begin{split}
        S_{\Delta^d_{n-1}} \cdots S_{\Delta^2_{n-1}} S_{\Delta^1_{n-1}} = S_{\Delta^d_{n-1}} S_{\Delta^1_{n-1}} \prod_{{\Delta_{n}}\neq a,b} P_{\Delta_{n}}^{\int \boldsymbol \Delta_{n-2} \cup_{n-2} \boldsymbol \Delta_{n} + \boldsymbol \Delta_{n} \cup_{n-2} \boldsymbol \Delta_{n-2}}.
    \end{split}
\end{equation}
We can check that 
\begin{equation}
    S_{\Delta^d_{n-1}} S_{\Delta^1_{n-1}}= -(-1)^{\int \boldsymbol \Delta^1_{n-1}\cup_{n-2} \boldsymbol \Delta^d_{n-1}   + \boldsymbol \Delta^d_{n-1}\cup_{n-2} \boldsymbol \Delta^1_{n-1}}  \prod_{{\Delta_{n}}= a,b} P_{\Delta_{n}}^{\int \boldsymbol {\Delta_{n-2}} \cup_{n-2} \boldsymbol \Delta_{n} + \boldsymbol \Delta_{n} \cup_{n-2} \boldsymbol \Delta_{n-2}},
\end{equation}
since $S_{\Delta^d_{n-1}} S_{\Delta^1_{n-1}}$ is $1$ (or $-P_a P_b$) if $\Delta^1_{n-1}, \Delta^d_{n-1}$ are both inward or outward (or one inward and one outward) in $\Delta_n = a$. Therefore,
\begin{eqs}
    S_{\Delta^d_{n-1}} \cdots S_{\Delta^2_{n-1}} S_{\Delta^1_{n-1}} =&  -(-1)^{\int \boldsymbol \Delta^1_{n-1}\cup_{n-2} \boldsymbol \Delta^d_{n-1}   + \boldsymbol \Delta^d_{n-1}\cup_{n-2} \boldsymbol \Delta^1_{n-1} } \\
    & \quad \cdot \prod_{{\Delta_{n}}} P_{\Delta_{n}}^{\int \boldsymbol \Delta_{n-2} \cup_{n-2} \boldsymbol {\Delta_{n}} + \boldsymbol \Delta_{n} \cup_{n-2} \boldsymbol \Delta_{n-2}}.
\end{eqs}
Together with \eqref{eq: prod of S}, we have
\begin{equation}
    \begin{split}
    &S_{\delta {\Delta_{n-2}}} \prod_{{\Delta_{n}}} P_{\Delta_{n}}^{\int \boldsymbol \Delta_{n-2} \cup_{n-2} \boldsymbol \Delta_{n} + \boldsymbol \Delta_{n} \cup_{n-2} \boldsymbol \Delta_{n-2}}\\
    =& (-1)^{\int \boldsymbol \Delta^{1}_{n-1} \cup_{n-2} \boldsymbol \Delta^{d}_{n-1} + \sum^d_{i=2} \int \boldsymbol \Delta^{i-1}_{n-1} \cup_{n-2} \boldsymbol \Delta^{i}_{n-1} } \left(-(-1)^{\int \boldsymbol \Delta^1_{n-1}\cup_{n-2} \boldsymbol \Delta^d_{n-1}   + \boldsymbol \Delta^d_{n-1}\cup_{n-2} \boldsymbol \Delta^1_{n-1}} \right) \\
    =& -(-1)^{\sum^d_{i=1} \int \boldsymbol \Delta^{i-1}_{n-1} \cup_{n-2} \boldsymbol \Delta^{i}_{n-1} }.
    \end{split}
\end{equation}
From the definition of $\cup_{n-2}$ product \eqref{eq: definition of cup n-2},
\begin{equation}
    \begin{split}
        &\sum^d_{i=1} \int \boldsymbol \Delta^{i-1}_{n-1} \cup_{n-2} \boldsymbol \Delta^{i}_{n-1} \\
        =& \sum^d_{i=1} \sum_{\Delta_n} \boldsymbol \Delta^{i-1}_{n-1} \cup_{n-2} \boldsymbol \Delta^{i}_{n-1} (\Delta_n) \\
        =& \sum_{``-"\text{-oriented } \Delta_n = \langle 0 \dots n \rangle } \sum_{j_1 < j_2 | j_1, j_2 \in \text{even}} \boldsymbol \Delta_{n-2} ( \langle 0 \cdots \hat{j_1} \cdots \hat{j_2} \cdots n \rangle) \\
        &+ \sum_{``+"\text{-oriented } \Delta_n = \langle 0 \dots n \rangle } \sum_{k_1 < k_2 | k_1, k_2 \in \text{odd}} \boldsymbol \Delta_{n-2} ( \langle 0 \cdots \hat{k_1} \cdots \hat{k_2} \cdots n \rangle),
    \end{split}
    \label{eq: prod of cup}
\end{equation}
where the relation between $\Delta_{n-2}$, $\Delta^i_{n-1}$, and $\Delta_n$ is shown in Fig.~\ref{fig: S_prod2}. The distinct orientations of $``-"\text{-oriented } \Delta_n$ and $``+"\text{-oriented } \Delta_n$ in the summation come from the fact that $j_1,j_2$ and $k_1,k_2$ in \eqref{eq: definition of cup n-2} have opposite orders.
Eq. \eqref{eq: prod of cup} is related to $w_2$ by the following lemma \ref{lemma: w2}, which is proved in appendix \ref{sec: formula for Stiefel-Whitney homology classes}. Therefore, we derive
\begin{equation}
    -(-1)^{\sum^d_{i=1} \int \boldsymbol \Delta^{i-1}_{n-1} \cup_{n-2} \boldsymbol \Delta^{i}_{n-1} } = (-1)^{\int_{w_2} \boldsymbol \Delta_{n-2} }.
\end{equation}

\begin{lemma}
    In $n$-dimension manifold with triangulation and branching structure, the homology class of $w_2$ can be represented by a $(n-2)$-chain $w_2 \in C_{n-2}(M_n, \ZZ_2)$:
    \begin{equation}
        w_2 = \sum_{\Delta_{n-2}} c(\Delta_{n-2}) \Delta_{n-2}
    \end{equation}
    where
    \begin{equation}
        \begin{split}
            c(\Delta_{n-2}) =& 1 + \sum_{``-"\text{-oriented } \Delta_n = \langle 0 \dots n \rangle } \sum_{j_1 < j_2 | j_1, j_2 \in \text{even}} \boldsymbol \Delta_{n-2} ( \langle 0 \cdots \hat{j_1} \cdots \hat{j_2} \cdots n \rangle) \\
            &+ \sum_{``+"\text{-oriented } \Delta_n = \langle 0 \dots n \rangle } \sum_{k_1 < k_2 | k_1, k_2 \in \text{odd}} \boldsymbol \Delta_{n-2} ( \langle 0 \cdots \hat{k_1} \cdots \hat{k_2} \cdots n \rangle).
        \end{split}
    \label{eq: c Delta}
    \end{equation}
\label{lemma: w2}
\end{lemma}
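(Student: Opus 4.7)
The plan is to identify the chain defined by \eqref{eq: c Delta} with the Stiefel--Whitney homology class $w_2$ by invoking Wu's formula. Since a spin manifold is orientable, $w_1 = 0$ and Wu's relation reduces in degree two to $w_2 = v_2$, where the Wu class $v_2 \in H^2(M_n, \ZZ_2)$ is characterized by the pairing $\int_{M_n} \boldsymbol{x} \cup v_2 = \int_{M_n} \mathrm{Sq}^2(\boldsymbol{x})$ for every $(n-2)$-cocycle $\boldsymbol{x}$. Using the Steenrod formula $\mathrm{Sq}^2(\boldsymbol{x}) = \boldsymbol{x} \cup_{n-4} \boldsymbol{x}$ for $\boldsymbol{x}$ of degree $n-2$ and passing to Poincar\'e duals, it suffices to show that the chain $w_2$ of \eqref{eq: c Delta} satisfies $\partial w_2 = 0 \pmod 2$ and that for every $(n-2)$-cocycle $\boldsymbol{x}$,
\[
\sum_{\Delta_{n-2}} c(\Delta_{n-2})\,\boldsymbol{x}(\Delta_{n-2}) \equiv \int_{M_n} \boldsymbol{x} \cup_{n-4} \boldsymbol{x} \pmod 2 .
\]

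For closure, I would fix an $(n-3)$-simplex $\Delta_{n-3}$ and compute $\sum_{\Delta_{n-2} \supset \Delta_{n-3}} c(\Delta_{n-2}) \pmod 2$ from \eqref{eq: c Delta}. Regrouping the terms by the $n$-simplex $\Delta_n \supset \Delta_{n-3}$ that produces them, the sum reduces to a count of index pairs on each such $\Delta_n$, which I expect to cancel in pairs mod 2 by the symmetry of the omitted-vertex structure, the ``$+1$'' offset per $\Delta_{n-2}$ providing the balance that makes the total sum vanish.

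For the cocycle pairing, I would expand $\int \boldsymbol{x} \cup_{n-4} \boldsymbol{x}$ using \eqref{eq: definition of higher cup product} with $p = q = n-2$ and $a = n-4$, so that each $n$-simplex contributes a sum over tuples $0 \le i_0 < \cdots < i_{n-4} \le n$ of products of two evaluations of $\boldsymbol{x}$ on $(n-2)$-subsimplices of $\Delta_n$. The diagonal terms, in which both factors are evaluated on the same $(n-2)$-subsimplex, collect by omitted-vertex pair to give exactly the parity-matching combinatorics appearing in \eqref{eq: c Delta}, with the two orderings of indices in \eqref{eq: definition of cup n-2} producing the even-even contribution on $``-"$-oriented $\Delta_n$ and the odd-odd contribution on $``+"$-oriented $\Delta_n$. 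The off-diagonal terms, in which the two subsimplices differ, should cancel mod 2 after applying the cocycle condition $\delta \boldsymbol{x} = 0$ together with the recursive identity \eqref{eq: recursive property of cup products}, which trades the non-symmetry of $\cup_{n-4}$ for coboundary terms.

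The hard part will be the detailed bookkeeping in the second step: aligning the index structure of the $\cup_{n-4}$ expansion with the even- versus odd-omitted-pair dichotomy in \eqref{eq: c Delta}, and verifying that all off-diagonal contributions telescope to zero using only $\delta \boldsymbol{x} = 0$ and \eqref{eq: recursive property of cup products}. The global ``$+1$'' in $c(\Delta_{n-2})$ is the most delicate piece and must be shown to match a uniform shift generated either by the mod-2 identity $\boldsymbol{x}(\Delta_{n-2})^2 = \boldsymbol{x}(\Delta_{n-2})$ or by a boundary-term contribution from \eqref{eq: recursive property of cup products} applied globally. Once this matching is made explicit, both halves of Wu's characterization are verified and the chain in \eqref{eq: c Delta} is confirmed to represent $w_2 \in H_{n-2}(M_n, \ZZ_2)$.
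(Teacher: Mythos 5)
There is a genuine gap at the heart of your verification. Your overall framing --- reduce to the Wu class via $w_2=v_2$ (legitimate here since $M_n$ is simply connected, so $v_1=w_1=0$), characterize $v_2$ by $\int_{M_n}\boldsymbol{x}\cup v_2=\int_{M_n}\mathrm{Sq}^2\boldsymbol{x}=\int_{M_n}\boldsymbol{x}\cup_{n-4}\boldsymbol{x}$, and then match the chain $c$ against this pairing --- is a coherent alternative strategy. But the mechanism you propose for the matching fails: there are \emph{no} diagonal terms in $[\boldsymbol{x}\cup_{n-4}\boldsymbol{x}](\Delta_n)$. By \eqref{eq: definition of higher cup product} with $p=q=n-2$ and $a=n-4$, the two arguments in each summand are $(n-2)$-faces of $\Delta_n=\langle 0\dots n\rangle$ whose union is the full vertex set $\{0,\dots,n\}$ and whose intersection is $\{i_0,\dots,i_{n-4}\}$, which has only $n-3$ vertices; since each face has $n-1$ vertices, the two faces can never coincide. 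Hence every term of $\int\boldsymbol{x}\cup_{n-4}\boldsymbol{x}$ is an ``off-diagonal'' product $\boldsymbol{x}(S_1)\boldsymbol{x}(S_2)$ with $S_1\neq S_2$; nothing ``collects by omitted-vertex pair'' into the linear expression $\sum_{\Delta_{n-2}}c(\Delta_{n-2})\boldsymbol{x}(\Delta_{n-2})$, and the global ``$+1$'' in $c(\Delta_{n-2})$ cannot arise from $\boldsymbol{x}(\Delta_{n-2})^2=\boldsymbol{x}(\Delta_{n-2})$. What remains is the task of converting a genuinely quadratic cochain expression into a linear one using only $\delta\boldsymbol{x}=0$ and \eqref{eq: recursive property of cup products}; that identity (a cochain-level Wu formula adapted to the branching structure) is essentially the entire content of the lemma, and your sketch does not supply it. The closure claim $\partial c=0$ is likewise only asserted.

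For comparison, the paper's proof takes a different and much shorter route: it invokes the Goldstein--Turner theorem that $\sum_{k\,|\,\dim k\geq n-2}\partial_{n-2}(k)$ (the sum over all simplices $k$ of their ``regular'' $(n-2)$-faces) is a chain representative of $w_2$, and then simply enumerates the regular $(n-2)$-faces for $\dim k=n-2$ (giving the $1$), and for $\dim k=n-1,n$ after assigning each $(n-1)$-simplex to the unique $n$-simplex in which it is an inward boundary (giving the odd--odd and even--even omitted-vertex pairs according to orientation). If you want to pursue the Wu-class route, you would need either an independent cochain-level formula for $v_2$ compatible with the branching structure, or to fall back on exactly the kind of combinatorial representative of the Stiefel--Whitney homology class that the paper uses.
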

\end{proof}

We can modify the sign of $S_{\Delta_{n-1}}$ as
\begin{equation}
     S^E_{\Delta_{n-1}} \equiv (-1)^{\int_E \boldsymbol \Delta_{n-1}}S_{\Delta_{n-1}}
\end{equation}
where $E \in C_{n-1}(M_n,\ZZ_2)$ is a choice of spin structure satisfying $\partial E = w_2$. In these modified operators, the constraint on the fermionic operator becomes
\begin{equation}
    S^E_{\delta {\Delta_{n-2}}} \prod_{{\Delta_{n}}} P_{\Delta_{n}}^{\int \boldsymbol \Delta_{n-2} \cup_{n-2} \boldsymbol \Delta_{n} + \boldsymbol \Delta_{n} \cup_{n-2} \boldsymbol \Delta_{n-2}} = 1,
\end{equation}
which is mapped to 
\begin{equation}
    \begin{split}
        G_{\Delta_{n-2}} =& U_{\delta {\Delta_{n-2}}} \prod_{{\Delta_{n}}} W_{\Delta_{n}}^{\int \boldsymbol \Delta_{n-2} \cup_{n-2} \boldsymbol \Delta_{n} + \boldsymbol \Delta_{n} \cup_{n-2} \boldsymbol \Delta_{n-2}} \\
        =& \prod_{{\Delta_{n-1}} \supset {\Delta_{n-2}}} X_{\Delta_{n-1}}  (\prod_{{\Delta_{n-1}}^\prime} Z_{{\Delta_{n-1}}^\prime}^{\int \delta \boldsymbol {\Delta_{n-2}} \cup_{n-2} \boldsymbol {\Delta_{n-1}}^\prime}),
    \end{split}
\end{equation}
where $U_{\lambda_{n-1}}$ is defined by
\begin{equation}
    U_{ \lambda_{n-1} + \lambda'_{n-1}} \equiv (-1)^{\int \boldsymbol \lambda_{n-1} \cup_{n-2} \boldsymbol \lambda'_{n-1}} U_{\lambda'_{n-1}} U_{\lambda_{n-1}},
\label{eq: U lambda}
\end{equation}
and it can be calculated directly from $U_{\Delta_{n-1}}$ defined in \eqref{eq: nd boson-fermion duality}:
\begin{eqs}
    {U}_{\lambda_{n-1}} \equiv \prod_{\Delta_{n-1}| \boldsymbol \lambda(\Delta_{n-1})=1} X_{\Delta_{n-1}}  \left(\prod_{{\Delta_{n-1}}^\prime}  Z_{{\Delta_{n-1}}^\prime}^{\int \boldsymbol {\Delta_{n-1}}^\prime \cup_{n-2} \boldsymbol {\lambda_{n-1}}}\right).
\end{eqs}
We need to impose this gauge constraint $G_{\Delta_{n-2}} = 1$ on bosonic operators for every $(n-2)$-simplex $\Delta_{n-2}$.

We also need to impose the even total parity constraint for fermions
\begin{equation}
    \prod_{\Delta_{n}} P_{\Delta_{n}} = 1
\end{equation}
since it is mapped to the bosonic operator $\prod_{\Delta_{n}} W_{\Delta_{n}} = 1$. After imposing the gauge constraints, the $n$-dimensional boson-fermion duality \eqref{eq: nd boson-fermion duality} is completed.

\section{Modified Gauss's law and Euclidean action}

\subsection{Gauss's law as boundary anomaly}

First, we consider the standard $(n-1)$-form $\ZZ_2$ lattice gauge theory on the $n$-dimensional manifold $M_n$:
\begin{equation}
    H^0 = - J_1 \sum_{\Delta_{n-1}} X_{\Delta_{n-1}} - J_2 \sum_{\Delta_n} W_{\Delta_n}
\end{equation}
with the gauge constraint (Gauss's law)
\begin{equation}
    G^0_{\Delta_{n-2}} = \prod_{\Delta_{n-1} \supset \Delta_{n-2}} X_{\Delta_{n-2}}=1.
\end{equation}
It is well-kwown that its Euclidean theory is $(n+1)$-dimensional Ising model (with a certain choice of $J_1$ and $J_2$) \cite{K79}:
\begin{equation}
    S_{\text{Ising}} (A_{n-1}) = -J \sum_{\Delta_{n} \subset Y} |\delta A_{n-1} (\Delta_{n})|
\end{equation}
where $A \in C^{n-1}(Y,\ZZ_2)$ is a $(n-1)$-cochain on the spacetime manifold $Y$, $|\delta A| = 0, 1$ gives $\delta A$ (mod $2$), and $J$ depends on $J_1$ and $J_2$. In this case, $S_{\text{Ising}}$ is invariant under the gauge transformation $A_{n-1} \ra A_{n-1} + \delta \Lambda_{n-2}$ for arbitrary $(n-2)$-cochain $\Lambda_{n-2} \in C^{n-2}(Y,\ZZ_2)$. Therefore, $S_{\text{Ising}}$ has no boundary anomaly under the standard Gauss's law.

Now, we propose a new class of $\ZZ_2$ lattice gauge theory:
\begin{equation}
    \begin{split}
        H &= - J_1 \sum_{\Delta_{n-1}} U_{\Delta_{n-1}} - J_2 \sum_{\Delta_{n}} W_{\Delta_{n}}
    \end{split}
\label{eq: modified Z2 lattice gauge theory}
\end{equation}
with the modified Gauss's law (gauge constraints) at $(n-2)$-simplices
\begin{equation}
    G_{\Delta_{n-2}} = \prod_{{\Delta_{n-1}} \supset {\Delta_{n-2}}} X_{\Delta_{n-1}}  (\prod_{{\Delta_{n-1}}^\prime} Z_{{\Delta_{n-1}}^\prime}^{\int \delta \boldsymbol {\Delta_{n-2}} \cup_{n-2} \boldsymbol {\Delta_{n-1}}^\prime})=1.
\label{eq: modified Gauss's law}
\end{equation}
This model describes a free fermion system, since it is dual to
\begin{equation}
    \begin{split}
        H_f &= - J_1 \sum_{\Delta_{n-1}} (-1)^{\int_E \Delta_{n-1}} i \g_{L(\Delta_{n-1})} \g'_{R(\Delta_{n-1})} - J_2 \sum_{\Delta_{n}} (- i \g_{{\Delta_{n}}} \g'_{\Delta_{n}}) \\
        &=  - J_1 \sum_{\Delta_{n-1}} S^E_{\Delta_{n-1}} - J_2 \sum_{\Delta_{n}} P_{\Delta_{n}}.
    \end{split}
\end{equation}
The modified Gauss's law \eqref{eq: modified Gauss's law} on a $(n-2)$-simplex $\Delta_{n-2}$, or equivalently on the dual $(n-2)$-cochain $\boldsymbol \Delta_{n-2}$, can be generalized to an arbitrary $(n-2)$-cochain $\lambda_{n-2} = \sum_{i} \boldsymbol \Delta^i_{n-2}$, the Gauss's law is
\begin{equation}
    \begin{split}
        1=&G_{\lambda_{n-2}} = \prod_{i} G_{\Delta^i_{n-2}} \\
        =& (\prod_{ \boldsymbol \Delta_{n-1} \in \delta \lambda_{n-2}} X_{\Delta_{n-1}})
        (\prod_{{\Delta_{n-1}}^\prime} Z_{{\Delta_{n-1}}^\prime}^{\int \delta \lambda_{n-2} \cup_{n-2} \boldsymbol {\Delta_{n-1}}^\prime}) \\
        & \quad \cdot (-1)^{\int \lambda_{n-2} \cup_{n-4} \lambda_{n-2} + \lambda_{n-2} \cup_{n-3} \delta \lambda_{n-2}}\\
    \end{split}
\label{eq: general gauge constraints}
\end{equation}
where the sign comes from anti-commutation of $X$ and $Z$ on the same simplex. This can be proved by induction:
\begin{enumerate}
    \item We first check for $\lambda_{n-2} = \boldsymbol \Delta_{n-2}$, where $\lambda_{n-2}$ contains a single $(n-2)$-simplex. We have $\boldsymbol \Delta_{n-2} \cup_{n-4} \boldsymbol \Delta_{n-2} + \boldsymbol \Delta_{n-2} \cup_{n-3} \delta \boldsymbol \Delta_{n-2}= 0$ by the definition of higher cup products since the vertices in \eqref{eq: definition of higher cup product} cannot match. For example, $\boldsymbol \Delta_{n-2}$ acts only nontrivial on a $(n-2)$-simplex with $(n-1)$ vertices, while $\boldsymbol \Delta_{n-2} \cup_{n-4} \boldsymbol \Delta_{n-2}$ has the input of $(n+1)$ vertices, which has 2 extra vertices at least. $\boldsymbol \Delta_{n-2}$ vanishes when it acts on any simplex with the extra vertices. The gauge constraint reduces the original form \eqref{eq: modified Gauss's law}.
    
    \item It is straightforward to check $G_{\lambda_{n-2}} G_{\lambda'_{n-2}}= G_{\lambda_{n-2}+\lambda'_{n-2}}$, using the recursive property of cup products:
\begin{equation}
    \alpha \cup_a \beta + \beta \cup_a \alpha = \alpha \cup_{a+1} \delta \beta + \delta \alpha \cup_{a+1} \beta + \delta (\alpha \cup_{a+1} \beta).
\end{equation}
\end{enumerate}

Consider now the following $(n-1)$-form gauge theory defined on a general triangulated $(n+1)$-dimensional manifold $Y$:
\begin{equation}
S(A_{n-1})= -\sum_{\Delta_{n} \subset Y} |\delta A_{n-1} (\Delta_{n})| + i \pi \int_Y (A_{n-1} \cup_{n-3} A_{n-1} + A_{n-1} \cup_{n-2} \delta A_{n-1}).
\label{eq: n-1 form gauge theory}
\end{equation}
where $A_{n-1}\in C^{n-1}(Y,\ZZ_2)$, and the gauge symmetry acts by $A_{n-1} \rightarrow A_{n-1} + \delta \Lambda_{n-2}$ for $\Lambda_{n-2} \in C^{n-2}(Y,\ZZ_2)$. The second term is the generalized "Steenrod square" defined in Ref.~\cite{LZW18}. The action is gauge-invariant up to a boundary term:
\begin{equation}
    \begin{split}
        &S(A_{n-1} + \delta \Lambda_{n-2}) - S(A_{n-1}) \\
        =& i \pi \int_{\partial Y} ( \Lambda_{n-2} \cup_{n-4} \Lambda_{n-2} + \Lambda_{n-2} \cup_{n-3} \delta \Lambda_{n-2} + \delta \Lambda_{n-2} \cup_{n-2} A_{n-1}) \\
        =& i \pi \int_{\partial Y} ( \Lambda \cup_{n-4} \Lambda + \Lambda \cup_{n-3} \delta \Lambda + \delta \Lambda \cup_{n-2} A)
    \end{split}
\label{eq:Ssimple boundary}
\end{equation}
where we have omited the subscript of $A_{n-1}$ and $\Lambda_{n-2}$ for simplicity.
This boundary term determines the Gauss's law for the wave-function $\Psi(A)$ on the spatial slice $M=\partial Y$:
\begin{equation}
\Psi(A + \delta \Lambda) = (-1)^{\omega(\Lambda,A)} \Psi (A)
\end{equation}
where $\omega(\Lambda,A) = \int_{M} ( \Lambda \cup_{n-4} \Lambda + \Lambda \cup_{n-3} \delta \Lambda + \delta \Lambda \cup_{n-2} A)$. The Gauss's law is the same as the gauge constraint \eqref{eq: general gauge constraints} if we identify $Z_{\Delta_{n-1}}$ as $(-1)^{A_{n-1} (\Delta_{n-1})}$ and $X_{\Delta_{n-1}}$ acts as the transformation $A_{n-1} \ra A_{n-1} + \boldsymbol \Delta_{n-1}$. The modified Gauss's law \eqref{eq: modified Gauss's law} represents the boundary anomaly of topological action \eqref{eq: n-1 form gauge theory} as we claimed.

In the following subsection, we derive the Euclidean action of the modified $\ZZ_2$ lattice gauge theory \eqref{eq: modified Z2 lattice gauge theory} explicitly, which is analogous to \eqref{eq: n-1 form gauge theory}. 

\subsection{Euclidean path integral of lattice gauge theories}

Start with the Hamiltonian of modified $\ZZ_2$ lattice gauge theory:
\begin{equation}
    \begin{split}
        H &= - J_1 \sum_{\Delta_{n-1}} U_{\Delta_{n-1}} - J_2 \sum_{\Delta_{n}} W_{\Delta_{n}} \\
        &= -J_1 \sum_{\Delta_{n-1}} X_{\Delta_{n-1}} (\prod_{{\Delta_{n-1}}^\prime} Z_{{\Delta_{n-1}}^\prime}^{\int \boldsymbol {\Delta_{n-1}}^\prime \cup_{n-2} \boldsymbol {\Delta_{n-1}}})
        - J_2 \sum_{\Delta_{n}} \prod_{\Delta_{n-1} \subset \Delta_n} Z_{\Delta_{n-1}}
    \end{split}
\end{equation}
with gauge constraints
\begin{equation}
    G_{\Delta_{n-2}} = \prod_{{\Delta_{n-1}} \supset {\Delta_{n-2}}} X_{\Delta_{n-1}}  (\prod_{{\Delta_{n-1}}^\prime} Z_{{\Delta_{n-1}}^\prime}^{\int \delta \boldsymbol {\Delta_{n-2}} \cup_{n-2} \boldsymbol {\Delta_{n-1}}^\prime})=1.
\end{equation}
The partition function is:
\begin{equation}
\begin{split}
\mathcal{Z} &=\text{Tr } e^{-\beta H} =  \text{Tr } T^M \\
\end{split}
\label{eq:partition function}
\end{equation}
where we use Trotter-Suzuki decomposition in imaginary time direction and $T$ is the transfer matrix defined as
\begin{equation}
T= \left(\prod_{\Delta_{n-2}} \delta_{G_{\Delta_{n-2}}, 1 }\right) e^{-\delta \tau H} .
\end{equation}
The first factor arises from the gauge constraints on the Hilbert space. The spacetime manifold consists of many time slices labelled by layers $\{i\}$. In the $i$th layer, we insert a complete basis (in Pauli matrix $Z_{\Delta_{n-1}}$):  $b^i_{n-1} \in C^{n-1}(M_n,\ZZ_2)$ (a $\ZZ_2$ field on each $\Delta_{n-1}$ of the spatial manifold $M_n$ such that $Z_{\Delta_{n-1}}=(-1)^{b^i_{n-1}(\Delta_{n-1})}$). The transfer matrix $T$ between the $i$th layer and the $(i+1)$th layer contains gauge constraints on every spatial $(n-2)$-simplex $\Delta_{n-2}$:
\begin{equation}
    \delta_{G_{\Delta_{n-2}}, 1 } = \frac{1+G_{\Delta_{n-2}}}{2}= \frac{1}{2} \sum_{a^{i+1/2}_{n-2}=0,1} (G_{\Delta_{n-2}})^{a^{i+1/2}_{n-2}}
\end{equation}
where we introduce the Lagrangian multiplier $a^{i+1/2}_{n-2} \in C^{n-2}(M_n,\ZZ_2)$ (a $\ZZ_2$ field living on each $\Delta_{n-2}$ of the spatial manifold $M_n$). Notice that $a^{i+1/2}_{n-2}$ defined between two time slices lives on the spatial $(n-2)$-simplex $\Delta_{n-2}$, which can be interpreted as the spacetime $(n-1)$-simplex between the two layers. From the similar calculation in Ref.~\cite{CK18}, we have
\begin{equation}
    \mathcal{Z} = \sum_{ \{ \{ a^{i+1/2}_{n-2}\} , \{ b^i_{n-1}\} \} } \exp ([S_{\text{Ising}} + S_{\text{top}}] (\{ \{ a^{i+1/2}_{n-2}\} , \{ b^i_{n-1}\} \}) )
\end{equation}
where
\begin{equation}
    \begin{split}
        &S_{\text{Ising}}(\{ \{ a^{i+1/2}_{n-2}\} , \{ b^i_{n-1}\} \})  \\
        =& \sum_i \left(-J_s \sum_{\Delta_n} |\delta b^i_{n-1}(\Delta_n)| - J_\tau \sum_{\Delta_{n-1}} | \left[b^i_{n-1} + b^{i+1}_{n-1} + \delta a^{i+1/2}_{n-2}\right] (\Delta_{n-1})| \right)
    \end{split}
\label{eq: S Ising}
\end{equation}
and
\begin{equation}
    \begin{split}
        &S_{\text{top}}(\{ \{ a^{i+1/2}_{n-2}\} , \{ b^i_{n-1}\} \})  \\
        = i \pi \sum_i &\int_{M_n}  a^{i+1/2}_{n-2} \cup_{n-4} a^{i+1/2}_{n-2} + a^{i+1/2}_{n-2} \cup_{n-3} \delta a^{i+1/2}_{n-2} \\
        &+ \delta a^{i+1/2}_{n-2} \cup_{n-2} b^{i+1}_{n-1} + b^{i}_{n-1} \cup_{n-2} ( b^{i}_{n-1} + b^{i+1}_{n-1} + \delta a^{i+1/2}_{n-2}).
    \end{split}
\label{eq: S top}
\end{equation}
Here $J_s, J_\tau$ are constants depending on $J_1,J_2,\delta \tau$ in the original Hamiltonian and we assume $J_s = J_\tau = J$ for simplicity. $|\cdots|$ gives the argument's parity $0$ or $1$. The gauge transformations act as
\begin{equation}
    \begin{split}
        b^i_{n-1} &\rightarrow b^i_{n-1} + \delta \lambda^i,\\
        a^{i+1/2}_{n-2} &\rightarrow a^{i+1/2}_{n-2} + \delta \mu^i + \lambda^i + \lambda^{i+1},
    \end{split}
\label{eq:gauge transformation}
\end{equation}
where $\lambda^i$ are arbitrary $(n-2)$-cochains and $\mu^i$ are arbitrary $(n-3)$-cochains.

If we interpret $a^{i+1/2}_{n-2}$ as spacetime $(n-1)$-cochains, we can rewrite
\begin{equation}
    \{ \{ a^{i+1/2}_{n-2}\} , \{ b^i_{n-1}\} \}  \ra A_{n-1} \in C^{n-1}(Y, \ZZ_2),
\end{equation}
 which is a $\ZZ_2$ field living on $(n-1)$-simplices in spacetime manifold $Y$. It is natural to write $S_{\text{Ising}}$ in \eqref{eq: S Ising} as
\begin{equation}
     S_{\text{Ising}} = -\sum_{\Delta_{n} \subset Y} |\delta A_{n-1} (\Delta_{n})|.
\end{equation}
The spacetime manifold $Y = M_n \times [-\infty,0]$ (spatial and temporal parts) is not a triangulation, since we only triangularize the spatial manifold $M_n$ under the discretized time. The (higher) cup products are not well-defined in $Y$. However, we can still write an expression  
\begin{equation}
     S_{\text{top}} = i \pi \int_{Y'} (A_{n-1} \cup_{n-3} A_{n-1} + A_{n-1} \cup_{n-2} \delta A_{n-1}).
\label{eq: S top spacetime}
\end{equation}
in $(n+1)$-dimensional triangulation $Y'$ such that $Y'$ is a refinement of $Y$. We can check that \eqref{eq: S top} and \eqref{eq: S top spacetime} produce the same boundary term under gauge transformations.

\section{Conclusions}

We have extended the the exact bosonization \eqref{eq: 2d boson-fermion duality} in 2d and \eqref{eq: 3d boson-fermion duality} in 3d to arbitrary dimensions. The dictionary for $n$-dimensional boson-fermion duality is given in \eqref{eq: nd boson-fermion duality}. This bosonization is a duality between any fermionic system in arbitrary $n$ spatial dimensions and $(n-1)$-form $\mathbb{Z}_2$ gauge theories in $n$ dimensions with gauge constraints (the modified Gauss's law). This map preserves locality: every local even fermionic observable is mapped to a local gauge-invariant bosonic operator. The formula has an explicit dependence on the second Stiefel-Whitney class of the manifold, and a choice of spin structure is needed. As a side product, we discover a new formula \eqref{eq: S P sign} for Stiefel-Whitney homology classes on lattices. In the Euclidean picture, we have shown that the Euclidean path integral of the $n$-dimensional $\ZZ_2$ gauge theory with modified Gauss's law is the $(n+1)$-dimensional Ising model with an additional topological Steenrod square \eqref{eq: n-1 form gauge theory} term.

\section*{Acknowledgement}
Y.C. thanks Po-Shen Hsin and his advisor Prof. Anton Kapustin for many very helpful discussions. Y.C. also thanks Tyler Ellison and Nathanan Tantivasadakarn for their useful feedback. This research was supported in part by the U.S. Department of Energy, Office of Science, Office of High Energy Physics, under Award Number de-sc0011632. Anton Kapustin was also supported by the Simons Investigator Award.

\appendix

\section{A formula for Stiefel-Whitney homology classes} \label{sec: formula for Stiefel-Whitney homology classes}

In this section, we prove Lemma \ref{lemma: w2}, Eq.~\eqref{eq: c Delta}.
First, let us recall the theorem proved in \cite{GT76}.
Let $s$ be a $p$-simplex, say $s = \langle v_0, v_1,\dots, v_p \rangle$. Let $k$ be another simplex which has s as a face; i.e., $s \subset k$ ($s$ may be equal to $k$). Let
\begin{equation}
    \begin{split}
        B_{-1} &= \text{set of vertices of $k$ less than $v_0$,}\\
        B_{0} &= \text{set of vertices of $k$ between $v_0$ and $v_1$,} \\
        B_{m} &= \text{set of vertices of $k$ between $v_m$ and $v_{m+1}$,} \\
        B_{p} &= \text{set of vertices of $k$ greater than $v_p$.} 
    \end{split}
\end{equation}
We say that $s$ is regular in $k$, if $ \# (B_m) = 0$ for every odd $m$. Let $\partial_p (k)$ denote the mod 2 chain which consists of all $p$-dimensional simplices $s$ in $k$ so that $s$ is regular in $k$. For example, $\langle 012 \rangle$ and $\langle 023 \rangle$ are regular in $\langle 0123 \rangle$ and therefore $\partial_2 (\langle 0123 \rangle) = \langle 012 \rangle + \langle 023 \rangle$. The theorem is \cite{GT76}:
\begin{theorem}
    $\sum_{k| \dim k \geq {(n-2)} } \partial_{n-2}(k)$ is a $(n-2)$-chain which represents $w_2$.
\label{eq: theorem}
\end{theorem}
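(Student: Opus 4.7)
The plan is to bootstrap from the earlier Halperin--Toledo representation on the first barycentric subdivision and then push the resulting cycle down to the original triangulation via an explicit chain map. First, I would invoke the Halperin--Toledo theorem: if $K'$ denotes the first barycentric subdivision of $K=M_n$, then the mod-2 chain
\begin{equation}
    \bar{s}_{n-2} \;=\; \sum_{\tau \in K'_{n-2}} \tau
\end{equation}
is a cycle in $C_{n-2}(K';\ZZ_2)$ whose homology class is Poincar\'e dual to $w_2(M_n)$. This gives a canonical starting representative and reduces the problem to a purely chain-level comparison between $K'$ and $K$.

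Second, I would build an explicit chain map $\pi_*: C_*(K';\ZZ_2) \to C_*(K;\ZZ_2)$ that is a chain-homotopy inverse to the subdivision operator $\mathrm{sd}_*: C_*(K) \to C_*(K')$. Using the branching structure on $K$, define $\pi$ on vertices by sending each barycenter $\hat{\sigma}$ to the maximal vertex of $\sigma$ in the linear order, and extend to higher simplices by simplicial approximation (equivalently, an Eilenberg--Zilber shuffle). Such a $\pi_*$ is standard to construct and is known to induce an isomorphism on homology, so $\pi_*(\bar{s}_{n-2})$ remains a cycle representing the same class.

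Third, I would compute $\pi_*(\bar{s}_{n-2})$ one simplex at a time. An $(n-2)$-simplex of $K'$ is the join of a flag $\sigma_0 \subsetneq \sigma_1 \subsetneq \cdots \subsetneq \sigma_{n-2}$ in $K$, so its image is nonzero iff the ``maximal vertex'' map is injective along the flag, and in that case the image is the $(n-2)$-simplex of $K$ spanned by those maximal vertices. For a fixed target simplex $s = \langle v_0, \ldots, v_{n-2} \rangle \subset k$, the flags that land on $s$ via ambient $k$ can be enumerated by how the flag's top simplex $\sigma_{n-2}$ sits inside $k$ and how the lower $\sigma_i$'s thread between the $v_j$'s; this naturally partitions the ambient vertices of $k$ into the intervals $B_{-1},B_0,\ldots,B_{n-2}$ of the theorem.

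Fourth --- and this is the main obstacle --- I would match the mod-2 count of contributing flags with the regularity condition $\#(B_m)=0$ for every odd $m$. The combinatorics factors over the intervals $B_m$: each $B_m$ contributes an independent combinatorial factor counting the number of ways a sub-flag threads through it, and a parity argument should show this factor is odd exactly when the interval is empty (or one in a specific ``even'' configuration). Summing over all $k \supseteq s$ with $\dim k \geq n-2$ then reproduces $\partial_{n-2}(k)$ and hence the claimed chain. The delicate part is verifying this interval-by-interval parity collapse and checking that contributions from different $k \supseteq s$ do not cancel against each other; this is essentially a Sperner-type lemma on chains of subsets, but it requires careful sign bookkeeping tied to the branching structure.
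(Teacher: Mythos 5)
The first thing to say is that the paper does not prove this theorem: it is quoted from Goldstein--Turner \cite{GT76} and used as a black box to establish Lemma~\ref{lemma: w2}, so there is no in-paper argument to compare against. What you have outlined is, in essence, the strategy of the cited reference itself: take the Halperin--Toledo representative $\sum_{\tau}\tau$ over all $(n-2)$-simplices of the first barycentric subdivision $K'$, push it forward along the ``last vertex'' map $K'\to K$ that sends each barycenter $\hat\sigma$ to the maximal vertex of $\sigma$ in the branching order, and identify the mod-2 coefficient of each $(n-2)$-simplex $s$ in the image with the number of pairs $(s,k)$ with $s$ regular in $k$. Your steps one through three are sound: the last-vertex map is simplicial (the images of the barycenters of a flag $\sigma_0\subsetneq\cdots\subsetneq\sigma_{n-2}$ all lie in $\sigma_{n-2}$), it is a simplicial approximation to the identity and hence a chain-homotopy inverse to subdivision, and the image of a flag simplex is the span of the maximal vertices, killed when they collide.

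The gap is your fourth step, and it is not a formality --- it is the entire content of the theorem. Two things are missing. First, the counting lemma itself: for fixed $s=\langle v_0,\dots,v_{n-2}\rangle$ you must show that the mod-2 number of flags $\sigma_0\subsetneq\cdots\subsetneq\sigma_{n-2}$ with $\max\sigma_i=v_i$ for all $i$ equals $\sum_{k\supseteq s}[\,s\ \text{regular in}\ k\,]$; without it the specific form of the regularity condition (why only the \emph{odd}-indexed intervals $B_m$ must be empty, with $m=-1$ counted as odd --- check this against the paper's example $\partial_2(\langle 0123\rangle)=\langle 012\rangle+\langle 023\rangle$, where $\langle 123\rangle$ fails via $B_{-1}$) is neither derived nor verified. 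Second, your proposed bookkeeping --- grouping flags by their top simplex and matching that group to the term $k$ in the outer sum --- does not line up bijectively: a flag contributing to $s$ necessarily has $\max\sigma_{n-2}=v_{n-2}$, whereas $s$ can be regular in a simplex $k$ whose largest vertex exceeds $v_{n-2}$ (this happens whenever $n-2$ is even and $B_{n-2}\neq\emptyset$). So a genuine mod-2 rearrangement across different ambient simplices $k$ is required, which is exactly the cancellation you flag as ``delicate'' and then leave unresolved. As written, the proposal is a correct plan with its decisive combinatorial identity asserted rather than proved; either carry out that induction on flags (peeling off $\sigma_0$ interval by interval) or do what the paper does and simply cite \cite{GT76} for the statement.
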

In particular, for any $n'$-simplex $\Delta_{n'} = \langle 0 \dots n' \rangle$, all $(n'-1)$-simplices regular in $\Delta_{n'}$ are
\begin{equation}
    \langle 0 \dots \hat{i} \dots n \rangle ~\forall i \in \text{odd}
\end{equation}
and all $(n'-2)$-simplices regular in $\Delta_{n'}$ are
\begin{equation}
    \langle 0 \dots \hat{i} \dots \hat{j} \dots n \rangle ~\forall i \in \text{odd}, j \in \text{even}, i<j.
\end{equation}
We now use this theorem to prove lemma \ref{lemma: w2}.

\begin{proof}[Proof of Lemma \ref{lemma: w2}]
For every $(n-2)$-simplex $\Delta_{n-2}$, it is regular in itself. This contributes the $1$ in the coefficient of $c(\Delta_{n-2})$ in \eqref{eq: c Delta}.

For every $(n-1)$-simplex $\Delta_{n-1}$, it is a boundary of two $n$-simplices $\Delta^L_n$ and $\Delta^R_n$, with $\Delta_{n-1}$ being an outward boundary of $\Delta^L_n$ and an inward boundary of $\Delta^R_n$. We define that $\Delta_{n-1}$ belongs to $\Delta^R_n$ and the summation of $\dim k= n-1,~n$ in theorem \ref{eq: theorem} can be written as:
\begin{equation}
    \begin{split}
        &\sum_{\Delta_{n-1}} \partial_{n-2} (\Delta_{n-1}) + \sum_{\Delta_{n}} \partial_{n-2} (\Delta_{n}) \\
        &= \sum_{\Delta_{n}} \left[ \partial_{n-2} (\Delta_{n}) + \sum_{\Delta_{n-1} \in \Delta_{n} | \Delta_{n-1} \text{ is inward} } \partial_{n-2} (\Delta_{n-1})\right].
    \end{split}
\end{equation}
If $\Delta_n = \langle 0 \dots n \rangle$ is $``+"$-oriented, the terms in the summation is
\begin{equation}
    \begin{split}
        &\partial_{n-2}(\langle 0 \dots n \rangle) + \sum_{0 \leq i \leq n | i \in  \text{odd}} \partial_{n-2}(\langle 0 \dots \hat{i} \dots n \rangle) \\
        =& \sum_{ i,j| i<j,~i \in  \text{odd},~j \in \text{even}} \langle 0 \dots \hat{i} \dots \hat{j} \dots n \rangle \\
        &+ \sum_{0 \leq i \leq n | i \in  \text{odd}} (\sum_{j<i | j\in \text{odd}} \langle 0 \dots \hat{j} \dots \hat{i} \dots n \rangle + \sum_{j>i | j\in \text{even}} \langle 0 \dots \hat{i} \dots \hat{j} \dots n \rangle) \\
        =& \sum_{ i,j| i<j,~i \in  \text{odd},~j \in \text{odd}} \langle 0 \dots \hat{i} \dots \hat{j} \dots n \rangle
        \end{split}
\label{eq: + contribution}
\end{equation}
where we have used the definition of regular simplex defined above.
Similarly, we can derive that if $\Delta_n = \langle 0 \dots n \rangle$ is $``-"$-oriented, the term is
\begin{equation}
    \sum_{ i,j| i<j,~i \in  \text{even},~j \in \text{even}} \langle 0 \dots \hat{i} \dots \hat{j} \dots n \rangle.
\label{eq: - contribution}
\end{equation}
Combining \eqref{eq: + contribution} and \eqref{eq: - contribution} with the $1$ from $\dim k = n-2$ in theorem \ref{eq: theorem}, we have
\begin{equation}
        w_2 = \sum_{\Delta_{n-2}} c(\Delta_{n-2}) \Delta_{n-2}
\end{equation}
where
\begin{equation}
    \begin{split}
        &c(\Delta_{n-2})= \\
        & 1 + \sum_{``-"\text{-oriented } \Delta_n = \langle 0 \dots n \rangle } \sum_{j_1 < j_2 | j_1, j_2 \in \text{even}} \boldsymbol \Delta_{n-2} ( \langle 0 \cdots \hat{j_1} \cdots \hat{j_2} \cdots n \rangle) \\
        &+ \sum_{``+"\text{-oriented } \Delta_n = \langle 0 \dots n \rangle } \sum_{k_1 < k_2 | k_1, k_2 \in \text{odd}} \boldsymbol \Delta_{n-2} ( \langle 0 \cdots \hat{k_1} \cdots \hat{k_2} \cdots n \rangle).
    \end{split}
\end{equation}

\end{proof}

\bibliographystyle{unsrt}
\bibliography{bibliography.bib}

\end{document}